
\documentclass[english]{cccconf}
\usepackage[comma,numbers,square,sort&compress]{natbib}
\usepackage{epstopdf}

\usepackage{amsmath}
\usepackage{amsthm}
\usepackage{listings}
\usepackage{color}
\usepackage{graphicx}
\usepackage{tikz}
\usepackage{pgfplots}
\usepackage{pgfplotstable}
\usepackage{indentfirst}
\usepackage{amssymb}
\usepackage{hyperref}

\newtheorem{definition}{Definition}[section]
\newtheorem{remark}{Remark}[section]
\newtheorem{theorem}{Theorem}[section]
\newtheorem{lemma}{Lemma}[section]

\newtheorem{algorithm}{Algorithm}[section]

\begin{document}

\title{Security Protection in Cooperative Control of Multi-agent Systems}

\author{Shuai Sun\aref{amss},
        Yilin Mo\aref{amss}}


\affiliation[amss]{
	Department of Automation and BNRist, Tsinghua University, Beijing 100084, P.~R.~China
	\email{suns19@mails.tsinghua.edu.cn, ylmo@tsinghua.edu.cn}}

\maketitle

\begin{abstract}
		Due to the wide application of average consensus algorithm, its security and privacy problems have attracted great attention. In this paper, we consider the system threatened by a set of  unknown agents that are both ``malicious'' and ``curious'', who add additional input signals to the system in order to perturb the final consensus value or prevent consensus, and try to infer the initial state of other agents.
	At the same time, we design a privacy-preserving average consensus algorithm equipped with an attack detector with a time-varying exponentially decreasing threshold for every benign agent, which can guarantee the initial state privacy of every benign agent, under mild conditions.
	The attack detector will trigger an alarm if it detects the presence of malicious attackers.
	An upper bound of false alarm rate in the absence of malicious attackers and the necessary and sufficient condition for there is no undetectable input by the attack detector in the system are given.
	Specifically, we show that under this condition, the system can achieve asymptotic consensus almost surely when no alarm is triggered from beginning to end, and an upper bound of convergence rate and
	some quantitative estimates about the error of final consensus value are given.
	Finally, numerical case is used to illustrate the effectiveness of some theoretical results.
\end{abstract}

\keywords{	multi-agent systems;  average consensus; security protection; intrusion detection}


\section{Introduction}

	Multi-agent systems have attracted widespread attention in recent years because of their better flexibility, good scalability, and excellent computing performance\cite{Bullo F}. Consensus is one of the most common tasks in multi-agent systems \cite{LiuQ} with applications in distributed estimation and optimization \cite{asqualetti}
	, sensor fusion \cite{Xiao}, distributed energy management \cite{Zhao} and sensing scheduling \cite{He}, and time synchronization \cite{Schenato}
	and so on.
	Under the traditional average consensus algorithm of discrete-time, at each time step, each agent updates its state value to be a weighted average of its own previous state value and those of its neighbors. 
	Since there is no fusion center that can monitor the behavior of all agents at any time step, systems are very vulnerable to internal and external attacks \cite{Michiardi2002}. Attackers can cause a series of serious problems, such as system security and internal privacy issues.
	
	``Malicious" and ``curious" attackers are two common attackers. 
	``Malicious'' attackers do not follow the average consensus algorithm but add additional input signals to the system in order to perturb the final consensus or prevent consensus. ``Curious'' attackers try to infer the initial states of other agents based on the update rule of the average consensus algorithm. This is extremely unfavorable in a privacy-sensitive situation. 
	
	In order to address the urgent need for security and privacy,  a number of security and privacy protection methods related to the average consensus algorithm have been proposed. In order to ensure the security of consensus, in \cite{Sundaram2008_1}\cite{Sundaram2008_2}, Sundaram and Hadjicostis used the method of parity space for fault detection to show the resilience of linear consensus network from the perspective of network topology. Pasqualetti et al discussed the relationship between consensus computation in unreliable networks and fault detection and isolation problem for linear systems and gave some attack detection and identification algorithms based unknown input observer method\cite{Bullo F}.
	On the other hand, to protect privacy, Huang et al. proposed an average consensus algorithm that adds Laplacian noise with exponential decay characteristics to the calculation, but the resulting convergence value is a random value \cite{Huang2012}. 
	In \cite{Man 2013}, Manitara and Hadjicostis proposed a privacy-preserving average consensus protocol and showed that the privacy of the initial state can be guaranteed when the network topology satisfies certain conditions, but they did not provide quantitative results on how good the initial state can be estimated. Mo and Murray proposed a privacy-preserving average consensus algorithm and proved that the initial state privacy of every benign agent can be effectively protected, under mild situations \cite{Mo 17 }. In \cite{Wang 2019}, Wang proposed a privacy-preserving protocol in which the state of every agent is randomly decomposed into two substates, such that the mean remains the same but only one of them is revealed to other neighboring agents. Hadjicostis and Dom\'inguez-Garc\'ia addressed the problem of privacy-preserving asymptotic average consensus in the presence of curious agents by using homomorphic encryption\cite{Had2020}.
	
	However, the security and privacy problems will become more difficult to deal with when attackers are both ``malicious'' and ``curious''. \cite{Ruan2017} proposed a homomorphic cryptography-based approach with high computational complexity, which can guarantee privacy and security in decentralized average consensus. Nevertheless, the security problem considered in \cite{Ruan2017} is the security of communication rather than against malicious attackers.
	In \cite{LiuQ}, Liu et al proposed a privacy-preserving average consensus algorithm equipped with a malicious attack detector by using the method of state estimation and used the reachable set to characterize the maximum disturbance that the attackers can introduce to the system.
	
	In this paper, we consider the case where the system is threatened by a set of unknown attackers that are both ``malicious" and ``curious". The main differences between this paper and the reference \cite{LiuQ} are as follows. (1). We use an orthogonal projection matrix of the observation matrix of the system to construct the residual vectors, which is used to design an attack detector, while \cite{LiuQ} used the method of state estimation.
	(2). We give the necessary and sufficient condition for there is no undetectable input in the system. Further, under this condition, we show that the system can achieve asymptotic consensus, and give an upper bound of convergence rate, 
	while the corresponding content is missing in \cite{LiuQ}.
	(3). We give some quantitative results about the estimate of the error of final consensus value from the perspective of theoretical analysis when asymptotic consensus is reached. However, \cite{LiuQ} characterized the maximum disturbance that the attackers can introduce to the system by using the method of ellipsoid approximation of reachable set, and the estimate error region may be unbounded.
	
	The main contributions of this paper are as follows.
	Based on the privacy-preserving consensus algorithm proposed in \cite{Mo 17 }, we design a privacy-preserving average consensus algorithm equipped with an attack detector with a time-varying exponentially decreasing threshold for every benign agent, which can guarantee the initial state privacy of every benign agent, under mild conditions.
	The detector will trigger an alarm if it detects the presence of malicious attackers.
	An upper bound of false alarm rate in the absence of malicious attacker and the necessary and sufficient condition for there is no undetectable input in the system are given.
	Under this condition, we show that the system can achieve asymptotic consensus almost surely when no alarm is triggered from beginning to end and give an upper bound of convergence rate and
	some quantitative results about the estimate of the error of final consensus value from the perspective of theoretical analysis.
	
	The rest of this paper is organized as follows: Section 2 briefly reviews the average consensus algorithm and introduces two kinds of attack models. 
	Sections 3 and 4 give the relevant results of privacy protection  against curious attackers and security protection against malicious attackers, respectively, including the specific definition of concepts and detailed proof of theorems. Section 5 gives numerical case to illustrate the effectiveness of some theoretical results and Section 6 concludes this paper.
	
	\textbf{Notations:} $ \mathbb{N} $ is the set of all non-negative integers. $\mathbb{R}^{n}$ is the set of $n\times 1$ real vectors. $\mathbb{R}^{n\times m}$ is the set of $n\times m$ real matrices. ${\rm tr}{M}$ is the trace of square matrix $M$. $\mathbf{1}$  is an all one vector of proper dimension. $\mathbf{0}$  is an all zero matrix of proper dimension. $\|v\|$ indicates the 2-norm of the vector $v$, while $\|M\|$ is the induced 2-norm of the matrix $M$. $X^+$ is the Moore--Penrose pseudoinverse of the matrix $X$. $  \{a(k)\}_{k=0}^{n}  $ stands for the finite set $\{ a(0), a(1), \cdots, a(n)\}$ and $  \{a(k)\}_{k=0}^{\infty}  $ stands for the infinite set $ \{a(0), a(1), \cdots \}$.

\section{Problem Formulation}
	\subsection{Average Consensus}
	In this subsection, we briefly introduce the average consensus algorithm.
	
	Consider a network composed by $n$ agents as an undirected connected graph $G=(V,E)$, where $V=\{1,2,\cdots,n\}$ is the set of agents, and $E\subseteq V\times V$ represents the communication relationship among the agents. An edge between $i$ and $j$, denoted by $(i,j)\in E$, implies that $i$ and $j$ can communicate with each other. The set of neighbors of $i$ is denoted by $\mathcal{N}_i=\{j\in V:(i,j)\in E, j \neq i \}$.
	
	Suppose that each agent $i\in V$ has an initial state $x_i(0)$. At any time $k$, agent $i$ first broadcasts its state to all of its neighbors and then updates its own state in the following linear combination manner:
	\begin{equation}
	\label{eq1}
	x_i(k+1)= a_{ii}x_i(k) + \sum_{j\in \mathcal{N}_i}a_{ij}x_j(k).
	\end{equation}
	where $a_{ij}\ne 0$ if and only if $i$ and $j$ are neighbors. Define $x(k) \triangleq  [x_1(k), x_2(k), \cdots, x_n(k)]^\top \in \mathbb{R}^n$ and $A\triangleq [a_{ij}]\in \mathbb{R}^{n\times n}$, where $ A $ is called \textit{weight matrix}. The state updating rule can be written in the following matrix form:
	\begin{equation}
	\label{eq2}
	x(k+1)=Ax(k).
	\end{equation}
	We say the agents reach a consensus if $
	\lim_{k\rightarrow \infty}x(k)= \gamma \mathbf{1}_{n \times 1}$,
	where $\gamma$ is an arbitrary scalar constant. If $\gamma=\frac{1}{n}\sum_{i=1}^nx_i(0)$, then we say the average consensus is reached.
	
	Assume the eigenvalues of $A$ are arranged as $\lambda_1\ge \lambda_2 \ge \cdots \ge \lambda_n$. It is well known that the necessary and sufficient conditions for average consensus are as follows:
	
	(A1) $\lambda_1=1$, and $|\lambda_i|<1$, $i=2,\cdots,n$;
	
	(A2) $A\mathbf{1}_{n \times 1}=\mathbf{1}_{n \times 1}$.
	
	In the rest of this paper, assume that $ A $ is \textbf{symmetric} and satisfies Assumption (A1) and (A2) above.

\subsection{Attack Models}
In this subsection, we introduce two kinds of attack models.

\textit{Malicious Attack:} Some agents intend to disrupt the average consensus or prevent consensus by adding arbitrary input signals instead of following the updating rule (\ref{eq1}) of average consensus algorithm, i.e.,
\begin{equation}
\label{eq4}
x_i(k+1)= a_{ii}x_i(k) +  \sum_{j\in \mathcal{N}_i}a_{ij}x_j(k)+u_i(k),
\end{equation}
where $u_i(k)\neq 0$ is the attack signal added by agent $i$ at time step $k$. Agent $ i $ is said to be a \textit{malicious} attacker if $ u_i(k) $ is nonzero for at least one time step $ k $, $ k \in \mathbb{N} $. The model for malicious attackers considered here is quite general, and the attack signal at every time step can be an arbitrarily determinant value.

This kind of malicious attackers can potentially either prevent benign agents, who follow the standard update rule  \eqref{eq1} of
the average consensus algorithm, from reaching a consensus or manipulate the final consensus value to be arbitrary.

\textit{Curious Attack:} Some agents intend to infer the initial states of other agents, which may not be desirable when the initial state privacy is of concern. Such agents are called \textit{curious} attackers.

In this paper, we deal with a set of unknown agents that are both ``malicious'' and ``curious''. We assume that the set of these unknown agents that are both ``malicious'' and ``curious'' is $ \{i_1,\cdots,i_p\} $. Meanwhile, assume that each agent knows the weight matrix $ A $ defined in the previous subsection.

\section{Privacy Protection Against Curious Attackers}
In this section, we address the problem of curious attackers inferring other benign agents' initial states.
\subsection{Privacy Preserving Consensus Algorithm}

In order to protect each benign agent's privacy, we adopt the privacy-preserving algorithm proposed in \cite{Mo 17 }. 
For the sake of completeness, we briefly describe the algorithm as follow.

\begin{algorithm}\label{algorithm}
	Let $v_i(k) (i=1,2,\cdots,n; k=0,1,\cdots)$ be standard normal distributed random variables, which are independent across $i$ and $k$. Denote $v(k) \stackrel{\bigtriangleup}{=}[v_1(k),v_2(k),\cdots,v_n(k)]^\top $. Based on $v(k)$ we can construct the following noisy signals
	\begin{equation}
	\label{eq5}
	w(k)=\left\{
	\begin{array}{ll}
	v(0), & \hbox{if} ~~k=0; \\
	\varphi^kv(k)-\varphi^{k-1}v(k-1), & \hbox{otherwise};
	\end{array}
	\right.
	\end{equation}
	where $0 < \varphi < 1$ is a constant.
	
	To protect the true value of states, the agents add noisy signals $w(k)$ into their states $x(k)$ and form a new state vector $x^+(k)$, before sharing with their neighbors, i.e., $x^+(k) = x(k) + w(k)$.
\end{algorithm}

\begin{remark}
	According to \cite{Mo 17 }, the privacy-preserving average consensus algorithm above guarantees the initial state privacy of every benign agent, under mild conditions, and that random noises introduced to the consensus process do not affect the consensus result. 
\end{remark}
Under this privacy-preserving algorithm, since the set of these unknown agents  that are both ``malicious" and ``curious" are $ \{i_1,\cdots,i_p\} $, the state updating rule is as follows:
\begin{equation}
\begin{aligned}
x(k+1) =Ax^+(k)+B u(k) 
=A(x(k)+w(k))+B u(k)   \label{x(k+1)},
\end{aligned}
\end{equation}
where $B=[e_{i_1},e_{i_2},\cdots,e_{i_p}]$ with $e_i$ being the $ i $ th canonical basis vector in $\mathbb{R}^n$, and $u(k)=[u_{i_1}(k), u_{i_2}(k), \cdots, u_{i_p}(k)]^\top $ is the attack input signal at time step $ k $.

\begin{theorem}[\cite{Mo 17 }]\label{privacy}
	The initial state value $ x_j(0) $ of agent $ j $ is kept private from these curious attackers $ \{i_1,\cdots,i_p\} $ if and only if $ \mathcal{N}_j \cup \{j\} \nsubseteq  \mathcal{N}_{i_1} \cup \cdots \cup \mathcal{N}_{i_p} \cup \{i_1,\cdots, i_p\}$.
\end{theorem}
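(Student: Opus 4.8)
The plan is to prove the two directions separately, working throughout with the broadcast dynamics implied by \eqref{x(k+1)}. Writing $x_j^+(k)=x_j(k)+w_j(k)$ and using that the benign agent $j$ satisfies $x_j(k+1)=\sum_{m\in\{j\}\cup\mathcal{N}_j}a_{jm}x_m^+(k)$, the noise increments obey the identity
$w_j(k+1)=x_j^+(k+1)-\sum_{m\in\{j\}\cup\mathcal{N}_j}a_{jm}x_m^+(k)$.
I will use repeatedly that the curious agents observe exactly the broadcasts $\{x_s^+(k)\}$ for $s$ in the monitored set $S\triangleq\{i_1,\dots,i_p\}\cup\mathcal{N}_{i_1}\cup\cdots\cup\mathcal{N}_{i_p}$, together with their own noises $v_{i_l}(k)$ and initial states.

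For the direction ``$\{j\}\cup\mathcal{N}_j\subseteq S\Rightarrow$ not private'' (the contrapositive of the ``only if'' claim) I would give an explicit reconstruction. When every agent appearing in $j$'s update is monitored, the attacker can evaluate the right-hand side of the displayed identity and thereby recover $w_j(k)$ for every $k\ge 1$. The noise design of Algorithm~\ref{algorithm} is then exploited through the telescoping sum $\sum_{k=1}^{K}w_j(k)=\varphi^K v_j(K)-v_j(0)$; since $0<\varphi<1$ and $v_j(K)$ is Gaussian, a Borel--Cantelli estimate gives $\varphi^K v_j(K)\to 0$ almost surely, so $v_j(0)=-\sum_{k\ge1}w_j(k)$ is recovered a.s. Finally $x_j(0)=x_j^+(0)-w_j(0)=x_j^+(0)-v_j(0)$, and privacy fails.

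For the converse ``$\{j\}\cup\mathcal{N}_j\nsubseteq S\Rightarrow$ private'', fix an unmonitored agent $m\in(\{j\}\cup\mathcal{N}_j)\setminus S$. Its broadcast $x_m^+(\cdot)$ appears in $j$'s update identity (or, if $m=j$, agent $j$ is not observed at all), so the attacker can no longer evaluate $\sum_m a_{jm}x_m^+(k)$ and the reconstruction above is blocked. To upgrade ``this method fails'' into genuine privacy I would argue by indistinguishability: construct an alternative pair (initial condition, noise sequence) in which $x_j(0)$ is shifted by an arbitrary $\delta$ while the change is absorbed by the unobserved noise attached to $m$, so that every monitored broadcast $\{x_s^+(k)\}_{s\in S}$ is left unchanged yet the realization remains statistically typical (in particular still satisfies $\varphi^K\tilde v(K)\to0$, so the curious agents cannot discard it on almost-sure grounds). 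Equivalently, one exhibits a uniform positive lower bound on the error covariance of any estimator of $x_j(0)$.

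The hard part is this converse. A naive compensation that keeps every broadcast fixed forces the noise of node $j$ to grow like $\delta\varphi^{-K}$, an atypical exploding realization that the curious agents \emph{can} rule out using the known noise law --- which is precisely why a fully monitored neighborhood breaks privacy. The real work is to route the compensating perturbation through the unmonitored agent $m$ so that it (i) cancels on every monitored broadcast across the infinite, coupled horizon and (ii) remains a legitimate, non-degenerate noise realization. Making this simultaneous cancellation precise --- or, equivalently, bounding the attacker's estimation-error covariance away from zero despite the whole coupled trajectory being observed on $S$ --- is the crux, and is where the topological condition $\{j\}\cup\mathcal{N}_j\nsubseteq S$ is used in an essential way.
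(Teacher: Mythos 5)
First, a remark on the benchmark: the paper itself contains no proof of this theorem --- it is imported verbatim from \cite{Mo 17 }, so your attempt must be measured against the proof in that reference. Your first direction (if $\{j\}\cup\mathcal{N}_j\subseteq S$ then privacy fails) is correct and is essentially the cited source's argument: the coalition evaluates the update identity to recover $w_j(k)$ for all $k\geq 1$ (this uses the paper's standing assumption that every agent knows $A$, and is unaffected by the malicious inputs because agent $j$ is benign), telescopes $\sum_{k=1}^{K}w_j(k)=\varphi^{K}v_j(K)-v_j(0)$, kills the boundary term almost surely, and obtains $x_j(0)=x_j^+(0)-v_j(0)$. No complaints there.

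The converse, however, is where the theorem lives, and you concede you have not proved it; worse, the route you sketch runs into a concrete obstruction beyond mere incompleteness. By the same telescoping identity, $\sum_{k=0}^{K}w_i(k)=\varphi^{K}v_i(K)\to 0$ almost surely for \emph{every} agent $i$, so along every statistically typical noise realization the consensus value equals the exact average of $x(0)$ (up to the attackers' own known inputs). The coalition therefore learns $\sum_i x_i(0)$ exactly, and no perturbation of noises alone --- in particular not one ``absorbed by the unobserved noise attached to $m$'' --- can hide a shift of $x_j(0)$ by $\delta$: any admissible alternative must shift some \emph{other} initial state by $-\delta$, and one must verify that a suitable absorber exists (e.g., when $j\notin S$, every neighbor of $j$ has the unmonitored neighbor $j$, so its initial state is not exactly recoverable and can serve this role). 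Moreover, your goal of exact cancellation of the perturbation on all monitored broadcasts over the infinite coupled horizon is too strong and generally unattainable with finite-energy noise edits; the proof in \cite{Mo 17 } does not attempt it. Instead it formalizes privacy as a strictly positive lower bound on the error covariance of the optimal (Gaussian maximum-likelihood) estimator of $x_j(0)$ given the coalition's full information, and establishes that bound by an explicit estimation-theoretic computation, showing positivity precisely under the condition $\mathcal{N}_j\cup\{j\}\nsubseteq S$. Your closing sentence gestures at exactly this covariance bound, but that reformulation \emph{is} the proof rather than an interchangeable alternative; as written, the sufficiency direction is missing its central argument.
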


\section{Security  Protection Against Malicious Attackers}
In this section, we address the problem of malicious attackers distributing the average consensus or preventing consensus.
\subsection{Attack Detector}

In order to deal with malicious attacks, we will design an attack detector for each benign agent. 
Without of loss generality, assume that \textit{agent $1$ is benign}, and we focus on designing an attack detector for agent $ 1 $. 
Suppose the neighbors of agent $ 1 $ are $\{j_1,j_2,\cdots,j_{m-1}\}$. 
The values that are available to agent $ 1 $ at $ k$ time step will be denoted by
\begin{equation}
y(k)=C(x(k)+w(k))  	\label{y(k)},
\end{equation}
where $C=[e_1, e_{j_1},e_{j_2},\cdots,e_{j_{m-1}}]^\top $.

We first propose a residual generator as follow, which uses the measurement sequence $\{y(k)\}_{k = 0}^{\infty}$ to generate a residual vector sequence $\{r(k)\}_{k=0}^{\infty}$ that will be a zero vector sequence when there is no noise protecting the agent's privacy and malicious attackers in the system. 
The response of linear consensus system of the form (\ref{x(k+1)}) and (\ref{y(k)}) over $ n+1 $ times steps at each time step $ k $ is given by
\begin{multline}\label{response}
\underbrace{\begin{bmatrix}
	y(k) \\
	y(k+1) \\
	\vdots \\
	y(k+n)
	\end{bmatrix}}_{Y_{[k,k+n]}} =\underbrace{\begin{bmatrix}
	C \\
	C A \\
	\vdots \\
	C A^{n}
	\end{bmatrix}}_{\mathcal{O}_{n}} x(k) \\
+ 
\underbrace{
	\begin{bmatrix}
	C & \mathbf{0} & \cdots & \mathbf{0} \\
	CA & C& \cdots & \mathbf{0} \\
	\vdots & \vdots & \ddots & \vdots \\
	CA^{n} & CA^{n-1} & \cdots & C
	\end{bmatrix}
}_{\mathcal{H}_{n}}
\underbrace{\begin{bmatrix}
	w(k) \\
	w(k+1) \\
	\vdots \\
	w(k+n)
	\end{bmatrix}}_{W_{[k,k+n]}} \\
+ 
\underbrace{
	\begin{bmatrix}
	\mathbf{0} & \mathbf{0} & \cdots & \mathbf{0} \\
	CB & \mathbf{0} & \cdots & \mathbf{0} \\
	\vdots & \vdots & \ddots & \vdots \\
	CA^{n-1}B & CA^{n-2}B & \cdots & \mathbf{0}
	\end{bmatrix}
}_{\mathcal{J}_{n}} 
\underbrace{\begin{bmatrix}
	u(k) \\
	u(k+1) \\
	\vdots \\
	u(k+n)
	\end{bmatrix}}_{U_{[k,k+n]}}.
\end{multline}

Now, we are ready to proceed with the construction of residual generator to be used to design an attack detector. In order to make the residual generator not affected by the initial state value $ x(0) $, we multiply the orthogonal projection matrix $ \mathcal{P}  = I_{m(n+1)} - \mathcal{O}_{n}\mathcal{O}_{n}^+$ on both the left and right sides of the equality \eqref{response}, and equality \eqref{response} can be simplified to the following form:
\begin{equation}\label{response_1}
\mathcal{P}Y_{[k,k+n]} =  \mathcal{P}\mathcal{H}_{n} W_{[k,k+n]}+ \mathcal{P}\mathcal{J}_{n} U_{[k,k+n]}.
\end{equation}
Up to now, based on the above results, we can construct the following residual generator, then use it to design an attack detector.
\begin{definition}[Residual Vector]\label{attack detector}
	Define the residue vector $ r(k) $ at each time step $ k $ as shown below:
	
	\begin{equation}\label{r(k)}
	r(k) \stackrel{\bigtriangleup}{=} \mathcal{P} Y_{[k,k+n]}.
	\end{equation}
	Then a malicious attack detector is obtained, which compares $\|r(k)\|$ with a threshold $ c\rho^k $ decreasing exponentially over time and triggers an alarm if and only if $\|r(k)\|$ is greater than the given threshold $ c\rho^k $ at some time step $ k $,
	where $c > 0 $ , $\varphi < \rho < 1 $ are two fixed constants selected by agent $ 1 $.
\end{definition}
If there is an alarm is triggered at some time step $ k $, we will think that there are malicious attackers in the system.

\begin{remark}
	It can be seen that there is a delay of $ n $ time steps in the process of detecting at each time step $ k $. In general, the delay is inevitable, because at any time step $ k $, using the observations up to the current moment cannot get enough information about the attack input at the current moment.
\end{remark}

Note that according to the definition of the residual detector, it can be seen that in the absence of noisy signals preserving privacy, that there is no malicious attacker in the system implies that the residual vector $ r(k) \equiv \mathbf{0}_{ m(n+1) \times 1} $ for every $ k \in \mathbb{N} $, but the opposite is not necessarily true. This situation is undesirable because in this case, there exists some attack input $ u $ such that the attack detector does not detect its existence. 	
We will discuss in detail how to avoid this situation in the third subsection detectability.
\subsection{False Alarm Rate}
In this subsection, we will focus on the situation where there is no malicious attacker in the system.
According to the privacy-preserving consensus algorithm, the noisy signal $ w(k) $ will be added into agents' states $ x(k) $ at each time step $ k $, and therefore, even if there is no malicious attacker, an alarm may be triggered at some time step $ k $. 
False alarm rate of the attack detector will be characterized here.

According the linearity of linear consensus system of the form (\ref{x(k+1)}) and (\ref{y(k)}) and the definition of residual vector $ r(k) $, $r(k)$ can be decomposed into the following two parts:
\begin{equation}
r(k) = r^a(k) + r^n(k),
\end{equation}
where $ r^a(k) $ and $ r^n(k) $ are respectively generated by malicious attackers' input and the noisy signals.
By using \eqref{response_1}, we can directly get the following equivalent relationship:
\begin{equation}\label{r(k)_0_a_n}
r^a(k) = \mathcal{P}\mathcal{J}_{n} U_{[k,k+n]}, \ \ \ \
r^n(k) = \mathcal{P}\mathcal{H}_{n} W_{[k,k+n]}.
\end{equation}
Now we define false alarm rate as follows.
\begin{definition}[False Alarm Rate] \label{False Alarm Rate}
	Define false alarm rate $\alpha$ as the probability of triggering false alarm at least once from the initial time step
	to infinity when there is no malicious attacker in the linear consensus system of the form (\ref{x(k+1)}) and (\ref{y(k)}), in other words,
	\begin{equation}
	\alpha \stackrel{\bigtriangleup}{=} \mathbb{P} \left[ \bigcup_{k=0}^{\infty} \left\{ \| r^n(k) \| > c\rho^k
	\right\} \right] \label{rate}.
	\end{equation}
\end{definition}
\begin{remark}
	Under the same conditions, a smaller $ \alpha $ often means better performance of corresponding attack detector.
\end{remark}
Before characterizing false alarm rate $ \alpha $, we first focus on $ r^n(k) $. For the convenience of notation, the matrix $ \mathcal{P}\mathcal{H}_{n} $ is uniformly partitioned according to the columns as $ \begin{bmatrix}
\mathcal{P}_0 & \mathcal{P}_1 & \cdots & \mathcal{P}_{n}
\end{bmatrix}$,
where each $ \mathcal{P}_i $ are of dimension $ m(n+1) \times n $ and $ \mathcal{P}_0 =  \mathcal{P} \mathcal{O}_{n} = \mathbf{0}_{p \times n}$.
Combined with the definition of $ w(k) $, for any time step $ k $, $ r^n(k) $ can then be expressed again as 
\begin{equation}\label{r^n(k)}
r^n(k) =  \sum_{i = 0}^{n-1} \varphi^{k+i} (\mathcal{P}_i - \mathcal{P}_{i+1}) v(k+i) + \varphi^{k+n} \mathcal{P}_{n} v(k+n).
\end{equation}

\begin{theorem}[An Estimation Of False Alarm Rate]  \label{An Estimation of False Alarm Rate}
	For a linear consensus system of the form (\ref{x(k+1)}) and (\ref{y(k)}), false alarm rate $ \alpha  $ of the attack detector above satisfies
	\begin{multline}\label{alpha_upper}
	\alpha \leq \frac{1}{c^2}	\frac{\rho^2}{\rho^2 - \varphi^2}   \left(
	\sum_{i = 0}^{n-1}  \varphi^{2i} {\rm tr}\left[(\mathcal{P}_i - \mathcal{P}_{i+1})^\top (\mathcal{P}_i - \mathcal{P}_{i+1}) \right] \right.\\
	+ \varphi^{2n} {\rm tr}\left[ \mathcal{P}_{n}^\top \mathcal{P}_{n} \right]
	\Bigg)  .
	\end{multline}
\end{theorem}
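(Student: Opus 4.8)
The plan is to control $\alpha$ by a union bound over time, estimate each resulting term with a second-moment (Markov) inequality, and then sum the geometric series that appears. First, by countable subadditivity of the probability measure applied to the union in \eqref{rate},
\[
\alpha \;\leq\; \sum_{k=0}^{\infty} \mathbb{P}\!\left[ \|r^n(k)\| > c\rho^k \right]
\;=\; \sum_{k=0}^{\infty} \mathbb{P}\!\left[ \|r^n(k)\|^2 > c^2\rho^{2k} \right],
\]
and Markov's inequality bounds each summand by $\mathbb{E}\big[\|r^n(k)\|^2\big]/(c^2\rho^{2k})$. It therefore suffices to compute the second moment $\mathbb{E}\big[\|r^n(k)\|^2\big]$ and check that the resulting tail is summable.

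The key step is this second-moment computation, which I would carry out directly from the representation \eqref{r^n(k)}. Abbreviating $M_i \triangleq \mathcal{P}_i - \mathcal{P}_{i+1}$ for $0 \le i \le n-1$ and $M_n \triangleq \mathcal{P}_n$, equation \eqref{r^n(k)} expresses $r^n(k) = \sum_{i=0}^{n} \varphi^{k+i} M_i\, v(k+i)$ as a linear combination of the vectors $v(k), v(k+1), \dots, v(k+n)$, which are mutually independent and each distributed as $N(\mathbf{0}, I_n)$. Expanding $\|r^n(k)\|^2$ and taking expectation, every cross term indexed by $i \ne j$ vanishes because $v(k+i)$ and $v(k+j)$ are independent and zero-mean; only the diagonal terms survive. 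Using the identity $\mathbb{E}\big[\|Mv\|^2\big] = \mathbb{E}\big[v^\top M^\top M v\big] = {\rm tr}\big(M^\top M\big)$, valid whenever $v \sim N(\mathbf{0}, I_n)$, and factoring out the common $\varphi^{2k}$, one obtains
\[
\mathbb{E}\big[\|r^n(k)\|^2\big] = \varphi^{2k}\,\underbrace{\left( \sum_{i=0}^{n-1} \varphi^{2i}\, {\rm tr}\!\left[ M_i^\top M_i \right] + \varphi^{2n}\, {\rm tr}\!\left[ M_n^\top M_n \right] \right)}_{\triangleq\, S},
\]
where $S$ does not depend on $k$.

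Substituting this estimate back, each summand is at most $(S/c^2)(\varphi^2/\rho^2)^k$. Since agent $1$ chooses the threshold parameters so that $\varphi < \rho < 1$, the ratio $\varphi^2/\rho^2$ is strictly less than $1$, so the geometric series converges and
\[
\alpha \;\leq\; \frac{S}{c^2} \sum_{k=0}^{\infty} \left( \frac{\varphi^2}{\rho^2} \right)^{k} = \frac{S}{c^2}\cdot\frac{\rho^2}{\rho^2 - \varphi^2},
\]
which is exactly the claimed bound \eqref{alpha_upper} once $S$ is written out in full. I do not anticipate a genuine difficulty in this argument; the only point demanding care is the vanishing of the cross terms in the second-moment computation, which rests squarely on the temporal independence of the noise sequence $\{v(k)\}$, together with the observation that the condition $\varphi < \rho$ is precisely what guarantees summability of the geometric tail and hence finiteness of the bound.
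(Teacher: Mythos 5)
Your proof is correct and takes essentially the same route as the paper's: a union bound via countable subadditivity, a Chebyshev/Markov second-moment bound on each term, and summation of the geometric series in $(\varphi/\rho)^2$ enabled by $\varphi < \rho$. The only difference is that you explicitly carry out the computation $\mathbb{E}\left[\|r^n(k)\|^2\right] = \varphi^{2k}\eta$ using the independence of the $v(k+i)$ and the trace identity, a step the paper states without detail.
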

\begin{proof}
	According to the definition of false alarm rate $ \alpha $, we can express $ \alpha $ as follow:
	\begin{equation}\label{alpha_proof}
	\begin{aligned}
	\alpha &= \mathbb{P} \left[ \bigcup_{k=0}^{\infty} \left\{ \|r^n(k) \| > c\rho^k
	\right\} \right] \leq \sum_{k=0}^{\infty} \mathbb{P} \left[ \|r^n(k) \| > c\rho^k \right] \\
	&\leq \sum_{k=0}^{\infty}  \frac{\mathbb{E}\left[r^n(k)^\top r^n(k) \right]}{c^2 \rho^{2k}} 
	= \frac{\eta}{c^2}  \sum_{k=0}^{\infty}  \frac{\varphi^{2k}}{\rho^{2k}} 
	= \frac{\eta}{c^2}\frac{\rho^2}{\rho^2 - \varphi^2},
	\end{aligned}
	\end{equation}
	where $ \eta  $ represents the item in braces on the right side of \eqref{alpha_upper}. The first inequality holds because of the countable additivity of probability measures and the second inequality holds because of Chebyshev's inequality.		
\end{proof}

\subsection{Detectability}
In this subsection, we will focus on the detectability of the attack detector.

\begin{definition}[Undetectable Input] 
	For a linear consensus system of the form (\ref{x(k+1)}) and (\ref{y(k)}), the attack input $ u $ introduced by 
	these unknown malicious agents $ \{i_1,\cdots,i_p\} $ is undetectable if 
	\[ \exists x_1,x_2 \in \mathbb{R}^n, {\rm s.t. } \forall k \in \mathbb{N}, y^{0+a}(x_1,u,k) = y^{0+a}(x_2,0,k),
	\]
	where $  y^{0+a}(x_1,u,k) $ is the part of $ y(x_1,u,k) $ generated by the initial state $ x_1 $ and the attack input $ u $ at time step $ k $.
\end{definition}

\begin{definition}[Undetectable Input By The Attack Detector]
	For a linear consensus system of the form (\ref{x(k+1)}) and (\ref{y(k)}), the attack input $ u $ introduced by 
	these unknown malicious agents $ \{i_1,\cdots,i_p\} $ is undetectable by the attack detector
	if 
	\[ \exists x_1,x_2 \in \mathbb{R}^n, {\rm s.t. } \forall k \in \mathbb{N}, r^{a}(x_1,u,k) = r^{a}(x_2,0,k).
	\]
\end{definition}

Before giving the necessary and sufficient conditions for there exists no undetectable input by the attack detector in the linear consensus system of the form \eqref{x(k+1)} and \eqref{y(k)}, we need the following three lemmas.

\begin{lemma}\label{Detect_before}
	For a linear consensus system of the form (\ref{x(k+1)}) and (\ref{y(k)}), if the first $ n $ columns of  $  \begin{bmatrix}
	\mathcal{O}_{n-1} & \mathcal{J}_{n-1}
	\end{bmatrix} $ are independent of each other and the last $ np $ columns 
	 i.e., $
	{\rm rank} \begin{bmatrix}
	\mathcal{O}_{n-1} & \mathcal{J}_{n-1}
	\end{bmatrix} - {\rm rank} [ \mathcal{J}_{n-1}]
	= n$,
	then the first $ p $ columns of $ \mathcal{J}_{n-1} $ are independent of each other and the last $ np $ columns 
	i.e., $
	{\rm rank}  [ \mathcal{J}_n ]
	- {\rm rank} [ \mathcal{J}_{n-1}] = p$,
	where the specific relationship between $ \mathcal{J}_{n} $ and $  \mathcal{J}_{n-1} $ can be expressed as 
	$ \mathcal{J}_{n} = \begin{bmatrix}
	\mathbf{0}_{m \times p} & \mathbf{0}_{m \times np} \\
	\mathcal{O}_{n-1}B &  \mathcal{J}_{n-1}
	\end{bmatrix} $.
\end{lemma}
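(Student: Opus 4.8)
The plan is to translate the rank hypothesis into two clean independence statements about column spaces, and then exploit the block structure of $\mathcal{J}_{n}$ together with the observation that $B$ merely selects $p$ of the columns of $\mathcal{O}_{n-1}$. Throughout I write $\mathrm{Col}(M)$ for the column space of a matrix $M$.

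First I would unpack the hypothesis. By the dimension formula for the sum of two subspaces,
\begin{equation}
\mathrm{rank}\begin{bmatrix} \mathcal{O}_{n-1} & \mathcal{J}_{n-1} \end{bmatrix} = \mathrm{rank}(\mathcal{O}_{n-1}) + \mathrm{rank}(\mathcal{J}_{n-1}) - \dim\!\left( \mathrm{Col}(\mathcal{O}_{n-1}) \cap \mathrm{Col}(\mathcal{J}_{n-1}) \right).
\end{equation}
The assumption that the left side exceeds $\mathrm{rank}(\mathcal{J}_{n-1})$ by exactly $n$ therefore forces $\mathrm{rank}(\mathcal{O}_{n-1}) - \dim(\mathrm{Col}(\mathcal{O}_{n-1}) \cap \mathrm{Col}(\mathcal{J}_{n-1})) = n$. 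Since $\mathcal{O}_{n-1}$ has only $n$ columns, this is possible only if $\mathrm{rank}(\mathcal{O}_{n-1}) = n$ (its columns are linearly independent) and $\mathrm{Col}(\mathcal{O}_{n-1}) \cap \mathrm{Col}(\mathcal{J}_{n-1}) = \{0\}$ (its column space meets that of $\mathcal{J}_{n-1}$ trivially). These are the only two facts I will use.

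Next I would reduce the target rank to the bottom block. Using the stated identity $\mathcal{J}_{n} = \begin{bmatrix} \mathbf{0}_{m \times p} & \mathbf{0}_{m \times np} \\ \mathcal{O}_{n-1}B & \mathcal{J}_{n-1} \end{bmatrix}$ and deleting the zero block row on top, which does not change the rank, I obtain $\mathrm{rank}(\mathcal{J}_n) = \mathrm{rank}\begin{bmatrix} \mathcal{O}_{n-1}B & \mathcal{J}_{n-1} \end{bmatrix}$. Because $B = [e_{i_1}, \cdots, e_{i_p}]$ is a selection of distinct canonical basis vectors, $\mathcal{O}_{n-1}B$ is exactly the submatrix of $\mathcal{O}_{n-1}$ formed by its columns $i_1, \cdots, i_p$. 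Being a subset of the $n$ linearly independent columns of $\mathcal{O}_{n-1}$, these $p$ columns are themselves independent, so $\mathrm{rank}(\mathcal{O}_{n-1}B) = p$; and since $\mathrm{Col}(\mathcal{O}_{n-1}B) \subseteq \mathrm{Col}(\mathcal{O}_{n-1})$, the trivial-intersection property is inherited, giving $\mathrm{Col}(\mathcal{O}_{n-1}B) \cap \mathrm{Col}(\mathcal{J}_{n-1}) = \{0\}$.

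Finally I would apply the dimension formula once more, now to $[\mathcal{O}_{n-1}B \;\; \mathcal{J}_{n-1}]$, to get $\mathrm{rank}[\mathcal{O}_{n-1}B \;\; \mathcal{J}_{n-1}] = p + \mathrm{rank}(\mathcal{J}_{n-1})$, whence $\mathrm{rank}(\mathcal{J}_n) - \mathrm{rank}(\mathcal{J}_{n-1}) = p$, as claimed. I do not anticipate a genuine obstacle, since the argument is purely linear-algebraic; the one point deserving care is the passage from ``all $n$ columns of $\mathcal{O}_{n-1}$ are independent of $\mathcal{J}_{n-1}$'' to the same statement for the selected $p$ columns $\mathcal{O}_{n-1}B$. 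This is precisely where it matters that $B$ selects distinct standard basis vectors rather than forming arbitrary combinations, since that guarantees both that the selected columns stay linearly independent and that their span remains inside $\mathrm{Col}(\mathcal{O}_{n-1})$.
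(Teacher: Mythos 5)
Your proof is correct and matches the paper's intended argument: the paper's entire proof is the one-line hint ``Just notice that $B = [e_{i_1},e_{i_2},\cdots,e_{i_p}]$ is column full rank,'' and your argument --- unpacking the rank hypothesis into $\mathrm{rank}(\mathcal{O}_{n-1}) = n$ plus $\mathrm{Col}(\mathcal{O}_{n-1}) \cap \mathrm{Col}(\mathcal{J}_{n-1}) = \{0\}$, deleting the zero block row, and noting that $\mathcal{O}_{n-1}B$ selects $p$ independent columns whose span inherits the trivial intersection --- is exactly the fleshed-out version of that hint. If anything, your write-up is more careful than the paper's, since the hint alone (full column rank of $B$) does not suffice without the trivial-intersection step you made explicit.
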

\begin{proof}
	Just notice that $ B = [e_{i_1},e_{i_2},\cdots,e_{i_p}] $ is column full rank.
\end{proof}

\begin{lemma}[\cite{Sundaram2010}]\label{strongly observable}
	A linear consensus system of the form (\ref{x(k+1)}) and (\ref{y(k)}), is said to be strongly observable, if $ y^0(k) + y^a(k) \equiv \mathbf{0}_{m \times 1} $ for all $ k \in \mathbb{N} $ implies $ x(0) = \mathbf{0}_{n \times 1} $(regardless of the values of the input $ u $), where 
	$ y^0(k) $ and $ y^a(k) $ are respectively the part of $ y(k)  $ generated by initial state $ x(0) $ and that generated by malicious input $ u $. The following statements are equivalent.
	\begin{enumerate}
		\item 	$
		{\rm rank} \begin{bmatrix}
		\mathcal{O}_{n-1} & \mathcal{J}_{n-1}
		\end{bmatrix} - {\rm rank} [ \mathcal{J}_{n-1}]
		= n$;
		\item the system is strongly observable.
	\end{enumerate}
\end{lemma}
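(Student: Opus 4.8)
The plan is to collapse both statements onto a single subspace of $\mathbb{R}^n$ and then invoke the standard finite characterization of the weakly unobservable subspace. First I would reinterpret the rank equality purely algebraically. Since $[\mathcal{O}_{n-1}\ \mathcal{J}_{n-1}]$ is formed by adjoining the $n$ columns of $\mathcal{O}_{n-1}$ to $\mathcal{J}_{n-1}$, the identity ${\rm rank}[\mathcal{O}_{n-1}\ \mathcal{J}_{n-1}]-{\rm rank}[\mathcal{J}_{n-1}]=n$ holds exactly when those $n$ columns are linearly independent modulo ${\rm col}(\mathcal{J}_{n-1})$, i.e. when $\mathcal{O}_{n-1}x_0\in{\rm col}(\mathcal{J}_{n-1})$ forces $x_0=\mathbf{0}$. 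I would then introduce the subspace $\mathcal{W}_n\triangleq\{x_0\in\mathbb{R}^n:\mathcal{O}_{n-1}x_0\in{\rm col}(\mathcal{J}_{n-1})\}$, which is a genuine subspace as the preimage of ${\rm col}(\mathcal{J}_{n-1})$ under a linear map. By the noise-free part of the $n$-step response \eqref{response}, $\mathcal{W}_n$ is precisely the set of initial states whose output $y^0+y^a$ can be driven identically to zero over the first $n$ steps by some input, so statement (1) is equivalent to $\mathcal{W}_n=\{\mathbf{0}\}$.

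The implication (1)$\Rightarrow$(2) is then immediate. If $y^0(k)+y^a(k)\equiv\mathbf{0}_{m\times1}$ for all $k$, stacking the first $n$ outputs gives $\mathcal{O}_{n-1}x(0)+\mathcal{J}_{n-1}U_{[0,n-1]}=\mathbf{0}$, whence $\mathcal{O}_{n-1}x(0)\in{\rm col}(\mathcal{J}_{n-1})$ and the rank condition forces $x(0)=\mathbf{0}$; in fact $n$ steps already suffice. The substantive direction is (2)$\Rightarrow$(1), which I would prove by contraposition: assuming $\mathcal{W}_n\neq\{\mathbf{0}\}$, I must produce a nonzero $x_0$ together with a single input that keeps the output zero for \emph{all} time, contradicting strong observability. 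The bridge is the identity $\mathcal{W}_n=\mathcal{V}^*$, where $\mathcal{V}^*$ is the weakly unobservable subspace (the largest set of initial states from which some input keeps the output identically zero).

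To establish $\mathcal{W}_n=\mathcal{V}^*$ I would define the nested family $\mathcal{W}_k=\{x_0:\text{output zeroable over the first }k\text{ steps}\}$ and derive its recursion. Because the input does not feed through directly (the top block row of $\mathcal{J}_n$ is $\mathbf{0}$, so $y(0)=Cx(0)$), one gets $\mathcal{W}_1=\ker C$ and $\mathcal{W}_{k+1}=\ker C\cap A^{-1}(\mathcal{W}_k+{\rm Im}\,B)$, reasoning that $x_0\in\mathcal{W}_{k+1}$ iff $Cx_0=\mathbf{0}$ and some $u(0)$ sends $Ax_0+Bu(0)$ into $\mathcal{W}_k$. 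This is a decreasing chain of subspaces starting from $\dim\ker C\le n-1$, so a dimension count forces stabilization by index $n$: a strictly decreasing chain would drop to $\{\mathbf{0}\}$ before index $n$, hence $\mathcal{W}_n=\mathcal{W}_{n+1}=\cdots$.

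Finally, once stabilization holds, $\mathcal{W}_n$ is controlled invariant inside $\ker C$: every $x_0\in\mathcal{W}_n=\mathcal{W}_{n+1}$ satisfies $Cx_0=\mathbf{0}$ and admits some $u(0)$ with $Ax_0+Bu(0)\in\mathcal{W}_n$, so iterating yields a trajectory remaining in $\mathcal{W}_n\subseteq\ker C$ and therefore an input zeroing the output forever; this gives $\mathcal{W}_n\subseteq\mathcal{V}^*$, while $\mathcal{V}^*\subseteq\mathcal{W}_n$ is trivial. Hence (1)$\iff\mathcal{W}_n=\{\mathbf{0}\}\iff\mathcal{V}^*=\{\mathbf{0}\}\iff$(2). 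I expect the main obstacle to be exactly this finite-to-infinite step, namely verifying the subspace recursion and the invariance argument that upgrades ``zero output over $n$ steps'' to ``zero output forever''; by contrast the rank reinterpretation and the forward implication are routine.
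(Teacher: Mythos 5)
Your proposal cannot be compared against an internal argument, because the paper does not prove this lemma at all: it is imported verbatim from \cite{Sundaram2010} as a known characterization of strong observability, and the paper's own text supplies only the statement. Judged on its own, your blind proof is correct and is essentially the classical geometric-control derivation (the weakly unobservable subspace and its defining recursion, as in Molinari-style algorithms), which is also the machinery underlying the cited reference.

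The individual steps all check out. The rank reinterpretation is right: ${\rm rank}\begin{bmatrix}\mathcal{O}_{n-1} & \mathcal{J}_{n-1}\end{bmatrix}-{\rm rank}[\mathcal{J}_{n-1}]=\dim{\rm col}(\mathcal{O}_{n-1})-\dim\bigl({\rm col}(\mathcal{O}_{n-1})\cap{\rm col}(\mathcal{J}_{n-1})\bigr)=n-\dim\mathcal{W}_n$, so statement (1) is exactly $\mathcal{W}_n=\{\mathbf{0}\}$; your indexing is also consistent, since $\mathcal{O}_{n-1}$ and $\mathcal{J}_{n-1}$ govern the outputs $y(0),\dots,y(n-1)$, i.e.\ a window of $n$ steps. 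The recursion $\mathcal{W}_{k+1}=\ker C\cap A^{-1}(\mathcal{W}_k+\mathrm{Im}\,B)$ is valid precisely because the system has no direct feedthrough (the top block row of $\mathcal{J}_n$ is zero, so $y(0)=Cx(0)$); one cosmetic caution is to say explicitly that $A^{-1}(\cdot)$ denotes the set preimage, since $A$ here need not be invertible. Stabilization by index $n$ follows from $\dim\mathcal{W}_1=\dim\ker C=n-m\leq n-1$ together with the observation that two consecutive equal terms freeze the recursion, and your case split (either the chain stabilizes at some index $\leq n$, or it strictly decreases and hence $\mathcal{W}_n=\{\mathbf{0}\}\supseteq\mathcal{W}_{n+1}$) covers both possibilities, giving $\mathcal{W}_n=\mathcal{W}_{n+1}$. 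The invariance bootstrap then correctly upgrades ``zeroable over $n$ steps'' to ``zeroable forever,'' i.e.\ $\mathcal{W}_n=\mathcal{V}^*$, and strong observability as defined in the lemma is exactly $\mathcal{V}^*=\{\mathbf{0}\}$. What your route buys, beyond making the paper self-contained at this point, is an explicit explanation of why the finite-window rank test certifies the infinite-horizon property --- the very finite-to-infinite bridge that the paper otherwise leans on implicitly, e.g.\ when the proof of Theorem \ref{Detect} extends the condition from $\mathcal{J}_{n-1}$ to $\mathcal{J}_n$ via Lemma \ref{Detect_before}. The cost is length: the citation buys the paper brevity, while your argument requires the subspace machinery; but there is no gap in it.
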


\begin{lemma}\label{observable}
	For a linear consensus system of the form (\ref{x(k+1)}) and (\ref{y(k)}), it is observable, i.e, $ {\rm rank}[\mathcal{O}_{n-1}] = n  $, almost surely.
\end{lemma}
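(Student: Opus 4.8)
The plan is to read ``almost surely'' as a statement over the randomly chosen admissible edge weights: the off-diagonal entries $\{a_{ij} : (i,j)\in E\}$ are drawn from a distribution that is absolutely continuous with respect to Lebesgue measure, with the diagonal pinned down by the row-sum constraint (A2). My first step is to reduce observability to the nonvanishing of a single polynomial. Since $\mathcal{O}_{n-1} = [C; CA; \cdots; CA^{n-1}]$ is $mn\times n$, the condition ${\rm rank}[\mathcal{O}_{n-1}] = n$ fails exactly when the Gram determinant $\Delta(A) \triangleq \det(\mathcal{O}_{n-1}^\top \mathcal{O}_{n-1})$ vanishes. Every entry of $\mathcal{O}_{n-1}$ is a polynomial in the entries of $A$, and the entries of $A$ are affine in the free weights, so $\Delta$ is a polynomial in the weight parameters.

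Next I would invoke the standard fact that a polynomial on $\mathbb{R}^d$ is either identically zero or vanishes only on a set of Lebesgue measure zero. Hence it suffices to prove $\Delta \not\equiv 0$, i.e.\ to exhibit a single admissible symmetric weight matrix $A^\star$ (with the prescribed sparsity pattern and satisfying (A1)--(A2)) for which $(A^\star, C)$ is observable. Once this is done, the unobservable parameter set is confined to a proper algebraic subvariety of measure zero, and the claim follows from absolute continuity of the weight distribution.

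The crux, and the step I expect to be the main obstacle, is producing such an observable admissible instance. I would use the PBH test: $(A,C)$ is unobservable iff $A$ has an eigenvector $\xi \neq 0$ with $C\xi = 0$, i.e.\ $\xi$ vanishing on the closed neighborhood $\{1\}\cup\mathcal{N}_1$. A structural argument rules this out generically: because $C$ observes node $1$ together with all of its neighbors and $G$ is connected, every state node is output-connected to an observed node, while the generically nonzero diagonal of $A$ (the self-weights) furnishes a perfect matching in the bipartite graph of $\begin{bmatrix}A\\C\end{bmatrix}$, so that structured matrix has generic column rank $n$ (no dilation). By the structural observability theorem dual to Lin's, $(A,C)$ is then observable for almost every weight assignment with this pattern, giving $\Delta \not\equiv 0$. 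The delicate point is reconciling this genericity with the extra constraints---symmetry and (A1)--(A2)---which cut the parameter space down to a lower-dimensional analytic set; I would handle it by verifying that $\Delta$ does not vanish identically on that constrained set, equivalently by constructing one symmetric, row-sum-one, patterned observable matrix (e.g.\ via a distance-layering argument that orders the remaining coordinates by graph distance from node $1$ and a Vandermonde-type computation in the distinct self-weights, as one checks directly on path and star graphs). Real-analyticity of $\Delta$ then forces $\Delta \neq 0$ almost everywhere on the admissible set, completing the proof.
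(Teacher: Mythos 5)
Your overall frame is sound: reducing rank failure to the vanishing of the Gram polynomial $\Delta(A) = \det(\mathcal{O}_{n-1}^\top \mathcal{O}_{n-1})$, noting that the row-sum constraint makes the diagonal affine in the free off-diagonal weights so that $\Delta$ remains a polynomial on the constrained parameter space, and invoking the measure-zero property of zero sets of nonzero polynomials (with (A1) only carving out an open subset, which is harmless). The genuine gap is the step you yourself flag as delicate and then do not complete: you never actually exhibit, for an \emph{arbitrary} connected graph $G$, one symmetric, row-sum-one matrix with the pattern of $G$ for which $(A,C)$ is observable. The structural observability theorem cannot supply it, for exactly the reason you note --- its genericity is over independent free parameters, whereas symmetry ties $a_{ij}=a_{ji}$ and the row-sum constraint makes the diagonal dependent, so the admissible set is a measure-zero subset of the structured class and pattern-genericity says nothing about it. Your proposed repair --- distance layering from node $1$ plus a Vandermonde computation in the self-weights, ``as one checks directly on path and star graphs'' --- does not extend to general graphs: a distance layer typically contains several vertices, the observations from $\mathcal{N}_1 \cup \{1\}$ see only aggregates across the layer boundary, and distinguishing states \emph{within} a layer is precisely the hard case (e.g., two leaves attached to a common vertex far from node $1$ have an unobservable difference mode whenever their edge weights coincide, and the self-weights you want to be distinct are themselves pinned by the row sums). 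So on the constrained affine set, $\Delta \not\equiv 0$ is asserted rather than proved, and the proof is incomplete exactly at its crux.

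For comparison, the paper disposes of this lemma in one line, citing Theorem 2 of Sundaram and Hadjicostis (reference [18] in the paper), which is precisely the statement that a connected network is generically observable from a node together with its neighborhood; your proposal is in effect an attempt to reconstruct that theorem's proof, and the skeleton matches, but the cited result contains the work you skipped --- an explicit argument that rank $n$ is attained by some admissible weight assignment. Two ways to close your gap: (i) take $A = I - \epsilon L_w$ with $L_w$ a weighted Laplacian of $G$, which automatically satisfies symmetry, $A\mathbf{1}=\mathbf{1}$, and (A1) for small $\epsilon > 0$, reducing the problem to exhibiting edge weights $w$ making $(L_w, C)$ observable (a spanning-tree or perturbation construction, or an appeal to known generic simplicity/observability results for weighted Laplacians); or (ii) adapt the weight construction in the cited theorem while checking it respects symmetry and row sums --- a point on which even the paper's one-line citation is somewhat glib, so your instinct that the constrained parameterization is the delicate issue is correct; it simply needs to be carried out rather than illustrated on path and star graphs.
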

\begin{proof}
	This is a direct corollary of Theorem 2 in \cite{Sundaram2008_3}.
\end{proof}

\begin{theorem} \label{Detect}
	For a linear consensus system of the form (\ref{x(k+1)}) and (\ref{y(k)}), the following statements are equivalent almost surely:
	\begin{enumerate}
		\item there is no undetectable input;
		\item there is no undetectable input by the attack detector;
		\item 	$
		{\rm rank} \begin{bmatrix}
		\mathcal{O}_{n-1} & \mathcal{J}_{n-1}
		\end{bmatrix} - {\rm rank} [ \mathcal{J}_{n-1}]
		= n.$
	\end{enumerate}
\end{theorem}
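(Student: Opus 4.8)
The plan is to establish the two equivalences (1)$\Leftrightarrow$(3) and (1)$\Leftrightarrow$(2), with the single probabilistic ingredient being observability, i.e. ${\rm rank}(\mathcal O_{n-1})=n$, which holds almost surely by Lemma~\ref{observable}. By Lemma~\ref{strongly observable}, statement (3) is exactly strong observability, so it suffices to show that, under observability, ``no undetectable input'' and ``no undetectable input by the attack detector'' are each equivalent to strong observability. Throughout I pass to the difference trajectory $\delta x(k+1)=A\,\delta x(k)+Bu(k)$ with $\delta x(0)=z:=x_1-x_2$, obtained by subtracting the attack-free run started at $x_2$ from the attacked run started at $x_1$. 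In these terms a (nonzero) attack $u$ is an undetectable input iff $C\,\delta x(k)\equiv\mathbf 0$ for some $z$, equivalently $CA^k z+y^a(k)\equiv\mathbf 0$ where $y^a$ is the zero-initial-state response to $u$; and $u$ is undetectable by the detector iff $r^a(k)=\mathcal P\mathcal J_n U_{[k,k+n]}=\mathbf 0$ for all $k$, equivalently $Y^a_{[k,k+n]}\in{\rm col}(\mathcal O_n)$ for all $k$, where $Y^a$ is the attack part of the window output and I have used $\mathcal P\mathcal O_n=\mathbf 0$.

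For (1)$\Leftrightarrow$(3) I argue the contrapositive in both directions against strong observability. If the system is not strongly observable there are $z\neq\mathbf 0$ and some $u$ with $CA^kz+y^a(k)\equiv\mathbf 0$; observability forbids $u=\mathbf 0$ (that would give $CA^kz\equiv\mathbf 0$ with $z\neq\mathbf 0$), so $u$ is a nonzero undetectable input, proving (1)$\Rightarrow$(3). Conversely, suppose $u\neq\mathbf 0$ is undetectable, so $CA^kz+y^a(k)\equiv\mathbf 0$ for some $z$. If $z\neq\mathbf 0$ the system is not strongly observable; if $z=\mathbf 0$, take the first step $t$ with $u(t)\neq\mathbf 0$, so the zero-initial-state run has $\delta x(k)=\mathbf 0$ for $k\le t$ and $\delta x(t+1)=Bu(t)\neq\mathbf 0$ because $B$ has full column rank (the structural fact behind Lemma~\ref{Detect_before}), while $C\,\delta x(k)\equiv\mathbf 0$; shifting the initial time to $t+1$ exhibits a nonzero state producing identically zero output, so the system is again not strongly observable. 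Hence (3)$\Rightarrow$(1), and with Lemma~\ref{strongly observable} this gives (1)$\Leftrightarrow$(3) almost surely.

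For (1)$\Leftrightarrow$(2) one inclusion is immediate: a global undetectable input produces a zero-input-response pattern on every window, so it is undetectable by the detector, giving (2)$\Rightarrow$(1). The converse is a gluing argument. If $u$ is undetectable by the detector then for each $k$ there is $\xi(k)$ with $Y^a_{[k,k+n]}=\mathcal O_n\xi(k)$, i.e. $y^a(k+j)=CA^j\xi(k)$ for $0\le j\le n$, and $\xi(k)$ is unique since $\mathcal O_n$ inherits full column rank from $\mathcal O_{n-1}$ (Lemma~\ref{observable}). Matching the entries common to the windows $[k,k+n]$ and $[k+1,k+1+n]$ yields $\mathcal O_{n-1}\xi(k+1)=\mathcal O_{n-1}\,A\,\xi(k)$, and full column rank of $\mathcal O_{n-1}$ forces $\xi(k+1)=A\,\xi(k)$. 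Thus $\xi(k)=A^k\xi(0)$ and $y^a(k)=CA^k\xi(0)$, so $u$ is a genuine undetectable input with $z=-\xi(0)$; therefore the two classes of inputs coincide and (1)$\Leftrightarrow$(2) almost surely. Combining the two equivalences proves the theorem.

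The step I expect to be the main obstacle is the gluing in (1)$\Leftrightarrow$(2): reconciling the \emph{finite}, overlapping length-$(n{+}1)$ windows scanned by the detector with the \emph{infinite}-horizon definition of undetectability. Its two delicate points---uniqueness of the window certificate $\xi(k)$ and the recursion $\xi(k+1)=A\,\xi(k)$---both rest on $\mathcal O_{n-1}$ (hence $\mathcal O_n$) having full column rank, which is precisely why the conclusion is only almost sure (Lemma~\ref{observable}); the matching passage between the order-$n$ detector matrices $(\mathcal O_n,\mathcal J_n)$ and the order-$(n{-}1)$ quantities in condition (3) is the rank bookkeeping furnished by Lemma~\ref{Detect_before} together with $B$ being a full-column-rank selection matrix.
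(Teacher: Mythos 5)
Your proof is correct, but it reaches the hard implication by a genuinely different route than the paper. The part tying the undetectability notions to the rank condition via strong observability is shared: like the paper, you combine Lemma~\ref{strongly observable} with the almost-sure observability of Lemma~\ref{observable}, though you argue contrapositively and dispose of the $z=\mathbf{0}$ case in (3)$\Rightarrow$(1) by a time-shift of the difference trajectory (using only that $B$ has full column rank), where the paper instead routes this case through its matrix $\mathcal{Q}_B$. The real divergence is in showing that condition (3) rules out detector-undetectable inputs: the paper is constructive, using Lemma~\ref{Detect_before} and the order-$n$ extension of the rank condition to exhibit a left inverse $\mathcal{Q}_B$ with $\mathcal{Q}_B\mathcal{P}\mathcal{J}_{n}=\left[\, I_p \;\; \mathbf{0}_{p\times np}\,\right]$, so that $u(k)=\mathcal{Q}_B r^a(k)$ decodes the attack directly from the residual; you instead prove (1)$\Leftrightarrow$(2) by gluing the per-window certificates $\xi(k)$ (uniqueness and the recursion $\xi(k+1)=A\xi(k)$ both resting on full column rank of $\mathcal{O}_{n-1}$) into a single initial condition, which needs only almost-sure observability and in fact establishes the slightly stronger fact that (1) and (2) are a.s.\ equivalent independently of (3); combined with your (1)$\Leftrightarrow$(3) this closes the theorem without ever constructing $\mathcal{Q}_B$. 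What each approach buys: yours is more conceptual and makes transparent why the detector's finite overlapping length-$(n{+}1)$ windows capture infinite-horizon undetectability, whereas the paper's explicit $\mathcal{Q}_B$ is not a disposable device---it is reused in Lemma~\ref{u<infty} via the bound $\|u(k)\|\le\|\mathcal{Q}_B\|\,\|r^a(k)\|$ and again in the error analysis leading to Theorem~\ref{error}, so if your proof replaced the paper's, the existence of $\mathcal{Q}_B$ would still have to be established separately for those downstream results.
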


\begin{proof}
	($ 2 \Rightarrow 3 $): Suppose that $ 	{\rm rank} \begin{bmatrix}
	\mathcal{O}_{n-1} & \mathcal{J}_{n-1}
	\end{bmatrix} - {\rm rank} [ \mathcal{J}_{n-1}]
	<n $, according to Lemma \ref{strongly observable}, the system is not strongly observable. Therefore, there exists a nonzero $ x(0) \in \mathbb{R}^n $ and an attack input $ u $ such that $ y^0(k) + y^a(k) \equiv \mathbf{0}_{m \times 1} $ for all $ k \in \mathbb{N} $. According to the definition of $ r(k) $, there must be that $ r^a(k)  \equiv \mathbf{0} $ for all $ k \in \mathbb{N} $. Since there is no undetectable input by the attack detector, it follows that 
	\begin{multline}\label{key}
	r^a(k)  \equiv \mathbf{0}, \forall k \in \mathbb{N}
	\Rightarrow u(k) \equiv \mathbf{0}, \forall k \in \mathbb{N} \\
	\Rightarrow y^a(k) \equiv \mathbf{0}, \forall k \in \mathbb{N}
	\Rightarrow y^0(k) \equiv \mathbf{0}, \forall k \in \mathbb{N},
	\end{multline}
	where the last step holds because $ y^0(k) + y^a(k) \equiv \mathbf{0}_{m \times 1} $ for all $ k \in \mathbb{N} $.
	Combined with Lemma \ref{observable}, we can get the following relationship almost surely:
	\begin{equation}\label{key}
	y^0(k) \equiv \mathbf{0}, \forall k \in \mathbb{N} \stackrel{{\rm rank}[\mathcal{O}_{n-1}] = n}{\Longrightarrow}
	x(0) = \mathbf{0},
	\end{equation}
	but it contracts the fact that $ x(0) $ is a nonzero vector.
	
	($ 1 \Rightarrow 3 $): The proof is similar to that of ($ 2 \Rightarrow 3 $).
	
	($ 3 \Rightarrow 2 $):
	First, we assert that there is a $ p \times m(n+1) $ matrix $ \mathcal{Q}_B $ \footnote{The subscript $B$ here means that the matrix $ \mathcal{Q}_B $ is related to $B$. 
	} that satisfies $ 	\mathcal{Q}_B\mathcal{P}\mathcal{J}_{n}  = 
	\left[
	\begin{array}{c|c}
	I_p & \mathbf{0}_{p \times np}
	\end{array}
	\right]$.
	Otherwise, it implies that the statement that the first $ p $ columns of $ \mathcal{P}\mathcal{J}_{n} $ are linearly independent of each other and the last $ np $ columns is false. Therefore, there exists a nonzero input sequence $ \{u(i)\}_{i=0}^{n} $ with $ u(0) \neq \mathbf{0}_{p \times 1} $ such that $\mathcal{P}\mathcal{J}_{n}U_{[0,n]} = \mathbf{0}_{m(n+1) \times 1}  $. 
	\begin{itemize}
		\item If $ \mathcal{J}_{n}U_{[0,n]} \neq \mathbf{0}_{m(n+1) \times 1}  $, according to the definition of $ \mathcal{P} $, there must exist a nonzero initial state $ x(0) \in \mathbb{R}^n $ such that $\mathcal{O}_{n} x(0) + \mathcal{J}_{n} U_{[0,n]} = \mathbf{0}_{m(n+1) \times 1}   $, but this contracts the result $ 	{\rm rank} \begin{bmatrix}
		\mathcal{O}_{n} & \mathcal{J}_{n}
		\end{bmatrix} - {\rm rank} [ \mathcal{J}_{n}]
		= n $, which is a direct corollary of the condition ${\rm rank} \begin{bmatrix}
		\mathcal{O}_{n-1} & \mathcal{J}_{n-1}
		\end{bmatrix} - {\rm rank} [ \mathcal{J}_{n-1}]
		= n$. 
		\item If $ \mathcal{J}_{n}U_{[0,n]} = \mathbf{0}_{m(n+1) \times 1}  $, however, it contracts the results of Lemma \ref{Detect_before}, for $ u(0) \neq \mathbf{0}_{p \times 1} $.
	\end{itemize}
	The above result shows that the initial assertion is correct.
	Since $ r^a(k) = \mathcal{P}\mathcal{J}_{n} U_{[k,k+n]} $, we can get that 
	\begin{equation}\label{Q_B}
	\mathcal{Q}_Br^a(k) = u(k) .
	\end{equation}
	Therefore, $	r^a(k) \equiv \mathbf{0}, \forall k \in \mathbb{N} \Rightarrow u(k) \equiv \mathbf{0}, \forall k \in \mathbb{N}$, i.e., there is no undetectable input by the attack detector.
	
		($ 3 \Rightarrow 1 $): Suppose there is a nonzero attack input $ u $ and an initial state $ x(0)  $
		 such that $ y^{0+a}(k) \equiv \mathbf{0} $ for all $ k \in \mathbb{N} $. Since $
		 {\rm rank} \begin{bmatrix}
		 \mathcal{O}_{n-1} & \mathcal{J}_{n-1}
		 \end{bmatrix} - {\rm rank} [ \mathcal{J}_{n-1}]
		 = n$, there must be that $ x(0) = \mathbf{0} $. Further, it follows that
		 \begin{multline}\label{key}
		 x(0) = \mathbf{0} \Rightarrow y^0(k) \equiv \mathbf{0}, \forall k \in \mathbb{N} \Rightarrow  y^a(k) \equiv \mathbf{0}, \forall k \in \mathbb{N} \\
		  \Rightarrow  r^a(k)\equiv \mathbf{0}, \forall k \in \mathbb{N} \Rightarrow  u(k)\equiv \mathbf{0}, \forall k \in \mathbb{N},
		 \end{multline}
		 where \eqref{Q_B} in ($ 3 \Rightarrow 2 $) is used in the last step. This contracts that the fact that $ u $ is nonzero. Therefore, there is no undetectable input.
\end{proof}

\subsection{Asymptotic Consensus and Error}
In order to protect the security and privacy of the system, we have designed a privacy-preserving average consensus algorithm equipped with an attack detector with a time-varying exponentially decreasing threshold for every benign agent.
At this point, there are naturally three problems: 
\begin{itemize}
	\item When there exists no undetectable inputs and no alarm is triggered from beginning to end, will the system eventually achieve consensus?
	\item   If the system can achieve a final consensus, what is the rate of convergence?
	\item How much error of the final consensus value will be?
\end{itemize}
We will answer these three problems in turn in this subsection.

\begin{lemma}\label{decomposition}(\cite{Mo 17 })
	Define a matrix $\mathcal{A} \stackrel{\bigtriangleup}{=} A - \frac{\mathbf{1}\mathbf{1}^\top }{n}$. 
	For $\forall k \in \mathbb{N}$, $ \mathcal{A}^k = A^k - \frac{\mathbf{1}\mathbf{1}^\top }{n} $.
\end{lemma}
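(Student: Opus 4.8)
The plan is to prove the identity by a short induction on $k$, after first isolating the algebraic properties of the rank-one matrix that do all the work. Write $J \triangleq \frac{\mathbf{1}\mathbf{1}^\top}{n}$, so that $\mathcal{A} = A - J$ and the claim becomes $\mathcal{A}^k = A^k - J$. First I would record three elementary facts that follow from the symmetry of $A$ together with Assumption (A2): since $\mathbf{1}^\top \mathbf{1} = n$ we have $J^2 = J$; since $A\mathbf{1} = \mathbf{1}$ we have $AJ = J$; and since $A$ is symmetric, $\mathbf{1}^\top A = (A\mathbf{1})^\top = \mathbf{1}^\top$, hence $JA = J$. Iterating the relation $AJ = J$ also yields $A^{j}J = J$ for every $j \geq 1$.

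With these absorption relations in hand, the induction is immediate. The base case $k = 1$ holds by the definition of $\mathcal{A}$. For the inductive step, assuming $\mathcal{A}^k = A^k - J$, I would expand
\[
\mathcal{A}^{k+1} = \mathcal{A}^k \mathcal{A} = (A^k - J)(A - J) = A^{k+1} - A^k J - JA + J^2,
\]
and then substitute $A^k J = J$, $JA = J$ and $J^2 = J$; the three correction terms collapse to $-J$, leaving $\mathcal{A}^{k+1} = A^{k+1} - J$, which closes the induction.

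There is no genuine obstacle here: the entire content is carried by the three absorption identities, and it is worth noting that only $A\mathbf{1} = \mathbf{1}$, the symmetry of $A$, and $\|\mathbf{1}\|^2 = n$ are used — the spectral part of Assumption (A1) plays no role in this particular lemma. The one point requiring care is the convention at $k = 0$: since $\mathcal{A}^0 = I$ while $A^0 - J = I - J$, the stated identity is to be read for $k \geq 1$, which is precisely the range in which it is applied in the sequel; I would therefore begin the induction at $k = 1$.
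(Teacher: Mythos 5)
Your proof is correct and takes essentially the same approach as the paper: the paper's one-line proof merely records the absorption identities $\frac{\mathbf{1}\mathbf{1}^\top}{n}A = \frac{\mathbf{1}\mathbf{1}^\top}{n} = A\frac{\mathbf{1}\mathbf{1}^\top}{n}$ (your $JA = J = AJ$, with $J^2 = J$ implicit) and leaves the induction to the reader, which you simply carry out explicitly. Your two side remarks are also accurate and sharpen the statement: only Assumption (A2) together with the symmetry of $A$ is actually used (the spectral condition (A1) plays no role despite being cited), and the identity genuinely fails at $k=0$ since $\mathcal{A}^0 = I \neq I - \frac{\mathbf{1}\mathbf{1}^\top}{n}$, so reading the lemma for $k \geq 1$ is the right convention.
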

\begin{proof}
	Just notice that by assumption (A1) and (A2), we can get that
	$ \frac{\mathbf{1}\mathbf{1}^\top }{n}A = \frac{\mathbf{1}\mathbf{1}^\top }{n} = A\frac{\mathbf{1}\mathbf{1}^\top }{n}$.
\end{proof}

\begin{lemma}\label{u<infty}
	For a linear consensus system of the form (\ref{x(k+1)}) and (\ref{y(k)}), the following inequality holds almost surely
	\begin{equation}\label{u(k)_conver}
	\sum_{k=0}^{\infty} \frac{\| u(k) \|}{(\rho + \varepsilon)^k}  < \infty ,
	\end{equation}
	with $ \varepsilon $ is a fixed constant satisfying $ 0 <  \varepsilon < 1 - \rho $, if 
	\begin{enumerate}
		\item no alarm is triggered;
		\item $ 	
		{\rm rank} \begin{bmatrix}
		\mathcal{O}_{n-1} & \mathcal{J}_{n-1}
		\end{bmatrix} - {\rm rank} [ \mathcal{J}_{n-1}]
		= n
		$.
	\end{enumerate}
\end{lemma}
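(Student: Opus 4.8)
The plan is to recover a bound on each $u(k)$ from the residual and then control the two contributions to the residual separately. Since hypothesis~(2) is exactly condition~(3) of Theorem~\ref{Detect}, the matrix $\mathcal{Q}_B$ constructed in the ($3\Rightarrow 2$) part of that proof exists and satisfies $\mathcal{Q}_B r^a(k) = u(k)$ for every $k$ (see \eqref{Q_B}); note that $\mathcal{Q}_B$ depends only on the deterministic matrices $\mathcal{O}_n,\mathcal{J}_n,\mathcal{P}$, so $\|\mathcal{Q}_B\|$ is a finite deterministic constant. Writing $r^a(k) = r(k) - r^n(k)$ and applying the triangle inequality gives
\[
\|u(k)\| = \|\mathcal{Q}_B r^a(k)\| \le \|\mathcal{Q}_B\|\bigl(\|r(k)\| + \|r^n(k)\|\bigr),
\]
so it suffices to show that both $\sum_k \|r(k)\|/(\rho+\varepsilon)^k$ and $\sum_k \|r^n(k)\|/(\rho+\varepsilon)^k$ converge.

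For the first series I would invoke hypothesis~(1): since no alarm is triggered, the detector of Definition~\ref{attack detector} never fires, i.e.\ $\|r(k)\| \le c\rho^k$ for all $k$, whence
\[
\sum_{k=0}^{\infty} \frac{\|r(k)\|}{(\rho+\varepsilon)^k}
\le c \sum_{k=0}^{\infty} \left(\frac{\rho}{\rho+\varepsilon}\right)^{k} < \infty,
\]
a deterministic geometric series that converges because $0 < \rho/(\rho+\varepsilon) < 1$.

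The second series is the crux, and it is where the almost-sure qualifier enters. Starting from the explicit form \eqref{r^n(k)}, setting $c_1 \triangleq \max\{\max_{0\le i\le n-1}\|\mathcal{P}_i - \mathcal{P}_{i+1}\|,\ \|\mathcal{P}_n\|\}$ and using $\varphi^i \le 1$, the triangle inequality yields the pathwise bound $\|r^n(k)\| \le c_1\,\varphi^k \sum_{i=0}^{n}\|v(k+i)\|$. Hence, with $q \triangleq \varphi/(\rho+\varepsilon)$,
\[
\sum_{k=0}^{\infty} \frac{\|r^n(k)\|}{(\rho+\varepsilon)^k}
\le c_1 \sum_{k=0}^{\infty} q^{k} \sum_{i=0}^{n}\|v(k+i)\|,
\]
where $0 < q < 1$ precisely because $\varphi < \rho < \rho+\varepsilon$. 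To prove that this nonnegative random series is finite almost surely, I would take its expectation and use Tonelli's theorem to interchange expectation and summation; since each $v(j)$ is a standard Gaussian vector, $\mathbb{E}\|v(j)\| = \mu$ is a finite constant independent of $j$, so the expected value equals $c_1(n+1)\mu\sum_{k} q^{k} = c_1(n+1)\mu/(1-q) < \infty$. A nonnegative random variable with finite expectation is finite almost surely, so the series converges a.s., and combining with the first bound through the displayed inequality for $\|u(k)\|$ gives the claim.

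I expect the main obstacle to be this last, noise-driven series: one must resist bounding $\|v(k+i)\|$ pathwise (it is unbounded in $k$) and instead extract summability in expectation, which is available only because the threshold decay rate $\rho$ is chosen strictly larger than the noise decay rate $\varphi$. The remaining ingredients—existence of $\mathcal{Q}_B$ under hypothesis~(2) and finiteness of the Gaussian moment $\mathbb{E}\|v(j)\|$—are routine.
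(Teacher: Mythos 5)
Your proof is correct, and its skeleton coincides with the paper's: both recover the input via $u(k) = \mathcal{Q}_B r^a(k)$ from \eqref{Q_B} (whose construction under hypothesis~(2) is indeed deterministic, as you note), both split the residual by the triangle inequality into a threshold part bounded by $c\rho^k$ because no alarm fires and a noise part, and both bound the noise part pathwise from \eqref{r^n(k)} by a constant times $\varphi^k\sum_{i=0}^{n}\|v(k+i)\|$. The genuine difference is the final probabilistic step. The paper proves $\sum_{k}\left(\varphi/(\rho+\varepsilon)\right)^k\|v(k)\| < \infty$ almost surely via Markov's inequality ($\mathbb{P}[\|v(k)\|\geq k]\leq n/k^2$) and the Borel--Cantelli lemma, obtaining the pathwise estimate that $\|v(k)\|<k$ for all sufficiently large $k$ and then summing $\sum_k k\left(\varphi/(\rho+\varepsilon)\right)^k$ deterministically. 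You instead take the expectation of the nonnegative random series via Tonelli's theorem and conclude almost-sure finiteness from finiteness of the first moment $\mathbb{E}\|v(j)\|$. Your route is shorter and at this point slightly stronger, since it shows the series is integrable rather than merely a.s.\ finite, and it dispenses with Borel--Cantelli entirely. What the paper's route buys in exchange is a reusable pathwise growth bound: the Borel--Cantelli conclusion \eqref{borel} established inside this proof is invoked again later, in the proof of Theorem \ref{converage}, to justify the polynomial bound \eqref{36}; your moment argument does not produce that pathwise estimate and would have to be supplemented there. For Lemma \ref{u<infty} itself, your argument is complete.
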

\begin{proof}
	Since no alarm is triggered, by the definition of  ``not triggering an alarm'', it follows that $ \|r^n(k) + r^a(k) \| \leq c \rho^k$. By using triangle inequality, we can get that $ \|r^a(k)\| \leq c\rho^k + \|r^n(k) \| $. Now we first focus on $ r^n(k) $. According to \eqref{r^n(k)}, there exists a fixed constant $  d > 0$ such that 
	$\|r^n(k)\| \leq d \varphi^k \sum_{i=0}^{n} \| v(k+i)\|$,
	where $ d$ can be selected as $ \sum_{i = 0}^{n-1} \varphi^i \| \mathcal{P}_i - \mathcal{P}_{i+1}\| + \varphi^n \| \mathcal{P}_n \| $.
	Therefore, it is not difficult to get that 
	\begin{equation}\label{r^n(k)_upper}
	\sum_{k=0}^{\infty} \frac{\|r^n(k)\|}{(\rho+ \varepsilon)^k} \leq (n+1)d \sum_{k=0}^{\infty} \left(\frac{\varphi}{\rho+\varepsilon}\right)^k \|v(k)\|.
	\end{equation}
	By Chebyshev's inequality, for any positive integer $ k $, we have
	\begin{equation}\label{key}
	\mathbb{P} \left[  \|v(k)\| \geq  k  \right]
	\leq  \frac{\mathbb{E}[v(k)^\top  v(k) ]}{k^2} = \frac{n}{k^2} \\
	\end{equation}
	and consequently it implies that 
	\begin{equation}\label{key}
	\sum_{k=1}^{\infty}  \mathbb{P} \left[  \|v(k)\| \geq  k  \right] \leq 	\sum_{k=1}^{\infty} \frac{n}{k^2} = \frac{n\pi^2}{6} < \infty.
	\end{equation}
	By Borel-Cantelli Lemma, it follows that 
	\begin{equation}\label{borel}
	\mathbb{P} \left[  \limsup_{k \rightarrow \infty} \left\{ \|v(k)\| \geq  k \right\}  \right] = 0.
	\end{equation}
	Since a point belongs to $  \limsup_{k} \left\{ \|v(k)\| \geq  k \right\} $ if and only if it belongs to infinitely many terms of the sequence $ \left\{ \|v(k)\| \geq  k \right\}_{k = 1}^{\infty}  $, this sequence of $ \left\{ \|v(k)\| \geq  k \right\}$ occurs at most a finite number of times almost surely. Therefore, there exists a positive integer $ k_1 $ such that $ \forall k > k_1 $, $ 	\| v(k) \| < k $ holds almost surely\footnote{Hereafter, ``almost surely" will be abbreviated as ``a.s." sometimes.}.
	It follows that 
	\begin{equation}\label{||v(k)||}
	\begin{aligned}[alignment]
	&\sum_{k= 0}^{\infty} \left( \frac{\varphi}{\rho+ \varepsilon} \right)^k \|v(k)\|  =  \left(
	\sum_{k= 0}^{k_1} + 	\sum_{k= k_1+1}^{\infty} 
	\right)	 \left( \frac{\varphi}{\rho+ \varepsilon} \right)^k \|v(k)\| \\
	&\stackrel{{\rm a.s.}}{<} 	\sum_{k= 0}^{k_1} \left( \frac{\varphi}{\rho+ \varepsilon} \right)^k \|v(k)\| + \sum_{k= k_1+1}^{\infty} k \left(  \frac{\varphi}{\rho+ \varepsilon} \right)^k < \infty,
	\end{aligned}
	\end{equation}
	where the last inequality holds because $0< \varphi < \rho $.
	Combined with \eqref{r^n(k)_upper}, we have
	$ \sum_{k=0}^{\infty} \frac{\| r^n(k) \|}{(\rho + \varepsilon)^k} \stackrel{{\rm a.s.}}{<}  \infty  $. 
	
	Since $ \|r^a(k)\| \leq c\rho^k + \|r^n(k) \| $ holds for all $ k \in \mathbb{N} $ when no alarm is triggered, then we have
	\begin{equation}\label{key}
	\sum_{k=0}^{\infty} \frac{\| r^a(k) \|}{(\rho + \varepsilon)^k}   
	\leq 	\sum_{k=0}^{\infty} \left( c \left(\frac{\rho}{\rho+\varepsilon}\right)^k + 
	\frac{\| r^n(k) \|}{(\rho + \varepsilon)^k}
	\right)   \stackrel{{\rm a.s.}}{<} 
	\infty.
	\end{equation}
	Note that for any $ k $, we have $ u(k) = \mathcal{Q}_Br^a(k) $, where $ \mathcal{Q}_B $ is defined in Theorem \ref{Detect}, this implies that $ \|u(k)\| \leq \| \mathcal{Q}_B\| \|r^a(k)\| $.
	Therefore, \eqref{u(k)_conver} holds almost surely.
\end{proof}
Now, we define convergence rate here.

\begin{definition}[Convergence Rate]
	Define the convergence rate $ \varrho $ of consensus algorithm as
	\begin{equation}\label{key}
	\varrho  \stackrel{\bigtriangleup}{=} \limsup_{k \rightarrow \infty} \left\| x(k) - \overline{x}(k) \right\|^{\frac{1}{k}},
	\end{equation}
	whenever the limit on the right-hand side exists, where $ \overline{x}(k)  = \frac{\mathbf{1}\mathbf{1}^\top }{n}x(k)$ denotes the average state vector at time step $ k $.
\end{definition}

\begin{theorem} \label{converage}
	For a linear consensus system of the form (\ref{x(k+1)}) and (\ref{y(k)}), an asymptotic consensus will be reached almost surely, i.e. $\lim_{k\rightarrow \infty} x(k) - \overline{x}(k) \stackrel{{\rm a.s.}}{=} \mathbf{0}_{n \times 1}$, and the convergence rate $ \varrho $ satisfies $
	\varrho \leq  \max\{\rho,|\lambda_2|,|\lambda_n| \}$, 
	if 
	\begin{enumerate}
		\item no alarm is triggered;
		\item $ 	
		{\rm rank} \begin{bmatrix}
		\mathcal{O}_{n-1} & \mathcal{J}_{n-1}
		\end{bmatrix} - {\rm rank} [ \mathcal{J}_{n-1}]
		= n
		$.
	\end{enumerate}
\end{theorem}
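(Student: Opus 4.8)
The plan is to obtain a closed-form expression for the consensus error $e(k) \triangleq x(k) - \overline{x}(k)$ and then bound the exponential growth rate of each of its constituent terms separately. First I would subtract $\overline{x}(k) = \frac{\mathbf{1}\mathbf{1}^\top}{n}x(k)$ from the dynamics \eqref{x(k+1)}. Using the identities $\frac{\mathbf{1}\mathbf{1}^\top}{n}A = A\frac{\mathbf{1}\mathbf{1}^\top}{n} = \frac{\mathbf{1}\mathbf{1}^\top}{n}$ recorded in the proof of Lemma \ref{decomposition}, one verifies that $\mathcal{A}\big(I - \tfrac{\mathbf{1}\mathbf{1}^\top}{n}\big) = \mathcal{A}$, so the error obeys the recursion
\[ e(k+1) = \mathcal{A} e(k) + \mathcal{A} w(k) + \Big(I - \tfrac{\mathbf{1}\mathbf{1}^\top}{n}\Big) B u(k). \]
Unrolling this recursion (and noting $e(0) = \big(I-\tfrac{\mathbf{1}\mathbf{1}^\top}{n}\big)x(0)$ together with $\mathcal{A}^k\big(I-\tfrac{\mathbf{1}\mathbf{1}^\top}{n}\big)=\mathcal{A}^k$) yields
\[ e(k) = \mathcal{A}^{k} x(0) + \sum_{j=0}^{k-1} \mathcal{A}^{k-j} w(j) + \sum_{j=0}^{k-1} \mathcal{A}^{k-1-j}\Big(I - \tfrac{\mathbf{1}\mathbf{1}^\top}{n}\Big) B u(j), \]
a decomposition into an initial-condition term, a noise term, and an attack term.

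The second step is to record the spectral facts about $\mathcal{A}$. Since $A$ is symmetric and satisfies (A1)--(A2), the matrix $\mathcal{A} = A - \frac{\mathbf{1}\mathbf{1}^\top}{n}$ is symmetric with eigenvalues $0, \lambda_2, \ldots, \lambda_n$; hence $\|\mathcal{A}^k\| = \sigma^k$ with $\sigma \triangleq \max\{|\lambda_2|,|\lambda_n|\} < 1$ by (A1). The initial-condition term is therefore bounded by $\sigma^k \|x(0)\|$, whose rate is exactly $\sigma$.

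The heart of the argument is the third step: bounding the two convolution sums using the two hypotheses via Lemma \ref{u<infty}. For the attack term, Lemma \ref{u<infty} supplies $\sum_k \|u(k)\|/(\rho+\varepsilon)^k < \infty$ a.s., which forces $\|u(k)\| \le M(\rho+\varepsilon)^k$ for some a.s.-finite random $M$, i.e. $\limsup_k \|u(k)\|^{1/k} \le \rho+\varepsilon$. For the noise term, the Borel--Cantelli estimate established inside the proof of Lemma \ref{u<infty} gives $\|v(k)\| < k$ for all large $k$ a.s., and since $\|w(j)\| \le \varphi^{j}\|v(j)\| + \varphi^{j-1}\|v(j-1)\|$ this yields $\limsup_j \|w(j)\|^{1/j} \le \varphi$. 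I would then invoke the elementary fact that a convolution $\sum_{j} \sigma^{k-1-j} a_j$ of a geometric sequence of rate $\sigma$ with a nonnegative sequence satisfying $\limsup_j a_j^{1/j} \le \beta$ obeys $\limsup_k\big(\sum_j \sigma^{k-1-j} a_j\big)^{1/k} \le \max(\sigma,\beta)$, proved by splitting the mixed geometric sum and treating the cases $\sigma \ne \beta$ and $\sigma = \beta$ (the latter producing an extra polynomial factor $k$ that does not affect the rate). This gives the noise term rate $\le \max(\sigma,\varphi)$ and the attack term rate $\le \max(\sigma,\rho+\varepsilon)$. All three rates are strictly below $1$ (using $\varphi<\rho$, $\varepsilon<1-\rho$, $\sigma<1$), so each term decays geometrically and $e(k)\to \mathbf{0}$ almost surely, establishing asymptotic consensus.

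Finally, collecting the three rates gives $\varrho \le \max\{\sigma,\varphi,\rho+\varepsilon\} = \max\{\sigma,\rho+\varepsilon\}$, since $\varphi < \rho$. Because $\varrho$ does not depend on the free parameter $\varepsilon \in (0,1-\rho)$, letting $\varepsilon \downarrow 0$ yields $\varrho \le \max\{\rho,\sigma\} = \max\{\rho,|\lambda_2|,|\lambda_n|\}$, as claimed. I expect the main obstacle to be the convolution rate estimate: keeping the bookkeeping of the mixed geometric sums clean (especially the boundary case where two rates coincide) and threading the ``almost surely'' qualifiers from the two invoked estimates of Lemma \ref{u<infty} correctly through all three terms so that the final convergence and rate statements both hold a.s.
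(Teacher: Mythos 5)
Your proposal is correct and takes essentially the same route as the paper's proof: the identical three-term decomposition of $x(k)-\overline{x}(k)$ into initial-condition, noise, and attack convolutions, the same reliance on Lemma \ref{u<infty} plus the Borel--Cantelli bound $\|v(k)\|<k$ for large $k$, and the same $\varepsilon \downarrow 0$ step to pass from $\max\{\rho+\varepsilon,|\lambda_2|,|\lambda_n|\}$ to $\max\{\rho,|\lambda_2|,|\lambda_n|\}$. The only deviations are cosmetic: you bound $\|w(j)\|$ directly by the triangle inequality and apply a generic convolution-rate lemma where the paper first telescopes the noise sum back into the $v(i)$'s, and your attack term correctly retains the factor $I-\mathbf{1}\mathbf{1}^\top/n$ at the $i=k-1$ term (which the paper's equation \eqref{x-x_ave} drops; harmless, since that factor has norm one).
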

\begin{proof}
	Using the result of  Lemma \ref{decomposition}, we can get that 
	\begin{equation}\label{x-x_ave}
	x(k) - \overline{x}(k) = \mathcal{A}^kx(0) + \sum_{i=0}^{k-1} \mathcal{A}^{k-i}w(i) + \sum_{i=0}^{k-1} \mathcal{A}^{k-1-i} B u(i).
	\end{equation}
	Now, we analyze the three terms on the right-side of the equality above in turn.
	
	(1) ``$ \mathcal{A}^kx(0) $'' : 
	For any initial state value $ x(0) $, we have
	$ \| \mathcal{A}^kx(0) \| \leq \max\{|\lambda_2|, |\lambda_n| \}^k \|x(0) \|  \rightarrow 0$ as $ k \rightarrow \infty$.
	
	(2) ``$ \sum_{i=0}^{k-1} \mathcal{A}^{k-i}w(i)  $'': According to Algorithm \ref{algorithm}, we can re-express $ \sum_{i=0}^{k-1} \mathcal{A}^{k-i}w(i)  $ as follow
	\begin{multline}\label{key}
	\sum_{i=0}^{k-1} \mathcal{A}^{k-i}w(i)  = \mathcal{A}\varphi^{k-1}v(k-1) \\
	+ \sum_{i=0}^{k-2}\varphi^i \mathcal{A}^{k-1-i}(\mathcal{A}-I)v(i).
	\end{multline}
	Similar to before, we also have $ \| \mathcal{A}\varphi^{k-1}v(k-1) \| \leq \max \{|\lambda_2|,|\lambda_n|\} \varphi^{k-1} 
	\|v(k-1)\|
	\stackrel{{\rm a.s.}}{\rightarrow} 0 $ as $ k \rightarrow \infty$ similarly. 
	Based on the result given by \eqref{borel} and the triangle inequality, there exists a constant $ d_1 > 0$ such that for any time step $ k $,
	\begin{equation}\label{36}
	\left\| \sum_{i=0}^{k-2}\varphi^i \mathcal{A}^{k-1-i}(\mathcal{A}-I)v(i) \right\| 
	\stackrel{\text{a.s.}}{\leq } d_1 k^2 \max \{\varphi,|\lambda_2|,|\lambda_n|\}^k.
	\end{equation}	
	Since $ 0 < \max\{\varphi, |\lambda_2|,|\lambda_n|\} < 1 $, if let $ k $ approach infinity, it follows that
	$ \lim_{k \rightarrow \infty}	\left\|	 \sum_{i=0}^{k-2}\varphi^i \mathcal{A}^{k-1-i}(\mathcal{A}-I)v(i) \right\| \stackrel{{\rm a.s.}}{=}  \mathbf{0}_{n \times 1} $.
	Therefore, we can get that
	$\sum_{i=0}^{k-2}\varphi^i \mathcal{A}^{k-1-i}(\mathcal{A}-I)v(i)$ convergences to $\mathbf{0}_{n \times 1}$ a.s..
	Combining with the previous results, further, we can get that $  	\sum_{i=0}^{k-1} \mathcal{A}^{k-i}w(i)$ convergences to $\mathbf{0}_{n \times 1} $ a.s. .
	
	(3) ``$ \sum_{i=0}^{k-1} \mathcal{A}^{k-1-i} B u(i) $'': According to the definition of $ B $ and the triangle inequality, it is not difficult to get that
	\begin{equation}\label{key}
	\left\| \sum_{i=0}^{k-1} \mathcal{A}^{k-1-i} B u(i) \right\| \stackrel{{\rm a.s.}}{\leq}  \sum_{i=0}^{k-1} \max\{|\lambda_2|,|\lambda_n|\}^{k-1-i} \|u(i)\|.
	\end{equation}
	According to the result of Lemma \ref{u<infty}, for any $ 0 < \varepsilon < 1- \rho $, there exists a constant $ d_{\varepsilon} > 0 $ related to $ \varepsilon $ such that $ \|u(k)\| \leq d_{\varepsilon} (\rho+\varepsilon)^k $ holds a.s. for all time step $ k $. Combined with these inequalities above, we can get that 
	\begin{multline}\label{u_sum}
	\left\| \sum_{i=0}^{k-1} \mathcal{A}^{k-1-i} B u(i) \right\| \stackrel{{\rm a.s.}}{\leq}  d_{\varepsilon} \sum_{i=0}^{k-1} \max\{\rho+\varepsilon,|\lambda_2|,|\lambda_n|\}^{k-1} \\
	= d_{\varepsilon}  k \max\{\rho+\varepsilon,|\lambda_2|,|\lambda_n|\}^{k-1} .
	\end{multline}
	If let $ k $ approach infinity, we can directly get that $ 	\sum_{i=0}^{k-1} \mathcal{A}^{k-i}Bu(i) $ convergences to $ \mathbf{0}_{n \times 1}  $ a.s. .
	
	Combining all results above, for any initial state $x(0)$, $ x(k) - \overline{x}(k) $ converges to $\mathbf{0}_{n \times 1}$ almost surely.

	Now we analyze the convergence rate whenever an asymptotic consensus can be reached. For convenience, we analyze the convergence rate of the three terms on the right-side of \eqref{x-x_ave} in turn and note that the final total convergence rate is the largest of these three.
	
	(1) ``$ \mathcal{A}^kx(0) $'' : Obviously, the convergence rate of the first term is $ \max \{|\lambda_2|,|\lambda_n|\} $.
	
	(2) ``$ \sum_{i=0}^{k-1} \mathcal{A}^{k-i}w(i)  $'': Similar to the above, the convergence rate of $  \mathcal{A}\varphi^{k-1}v(k-1) $ is $ \varphi  $.
	According to \eqref{36}, we can get that the convergence rate of the item on left-hand side of inequality in \eqref{36}  is no more than $   \max \{\varphi,|\lambda_2|,|\lambda_n|\} $ whenever the inequality holds.
	
	(3) ``$ \sum_{i=0}^{k-1} \mathcal{A}^{k-1-i} B u(i) $'': According to \eqref{u_sum}, we can directly get that the convergence rate of the last term in right hand side of \eqref{x-x_ave} is no more than $  \max\{\rho+\varepsilon,|\lambda_2|,|\lambda_n|\}  $ for any $ 0 < \varepsilon < 1-\rho $ whenever an asymptotic consensus can be reached. Indeed, this implies that its convergence rate is no more than $ \max\{\rho,|\lambda_2|,|\lambda_n|\} $.
	
	Since $ 0 < \varphi < \rho $, combining with all results above, we can get that when an asymptotic consensus is reached, the convergence rate $ \varrho $ satisfies that $ 	\varrho \leq  \max\{\rho,|\lambda_2|,|\lambda_n| \}$.
\end{proof}

Now we come to answer the last of these three problems raised at the beginning of this subsection, namely how to estimate the error of the final consensus value. 
\begin{definition}(The Error Of The Final Consensus Value)
	The error  of the final consensus value of the system of the form \eqref{x(k+1)} and \eqref{y(k)} $e$ is defined as follow
	\begin{equation}\label{key}
	e \stackrel{\bigtriangleup}{=} \frac{1}{n}\mathbf{1}_{1 \times p}^\top  \sum_{i = 0}^{\infty}u(i).
	\end{equation}
\end{definition}
\begin{remark}
	The definition of error above is well-defined, for the noisy signals have no effect on the final consensus value, which has been proved in Theorem \ref{converage}. 
\end{remark}
It is worth noting that the estimation of error needs to be carried out under the premise that no alarm is triggered from beginning to end in the system. According to the definition of ``not triggering an alarm'', we have $ \|r(k)  \| \leq c \rho^k$ for all time step $ k $. By using \eqref{r(k)_0_a_n}, \eqref{Q_B}, Algorithm \ref{algorithm} and summing the time step $ k $ from $0$ to infinity, we have
\begin{equation}\label{key}
\sum_{k=0}^{\infty} u(k) = \mathcal{Q}_B\footnote{Note that $\mathcal{Q}_B$ may not be unique. Here we select the one such that $ \|\mathbf{1}_{1 \times p}\mathcal{Q}_B\| $ achieves the minimum.} \left(\sum_{k=0}^{\infty} r(k) +   \sum_{i=1}^{n}\varphi^{i-1}\mathcal{P}_i v(i-1)\right).
\end{equation}
Substituting the above equality into the definition of error $ e $, we can get
\begin{equation}\label{e}
e = \frac{\mathbf{1}_{1 \times p} \mathcal{Q}_B }{n}
\left(\sum_{k=0}^{\infty} r(k) +   \sum_{i=1}^{n}\varphi^{i-1}\mathcal{P}_i v(i-1)\right).
\end{equation}
	For convenience and simplicity of notation, let $ s_B $ and $ T_B $ denote $ \frac{\mathbf{1}_{1 \times p} \mathcal{Q}_B }{n}  \sum_{k=0}^{\infty} r(k) $ and $ \sum_{i=1}^{n}\frac{\varphi^{i-1}}{n}\mathbf{1}_{1 \times p} \mathcal{Q}_B  \mathcal{P}_i v(i-1) $ respectively. Since $v_i(k) (i=1,2,\cdots,n; k=0,1,\cdots)$ are standard normal distributed random variables, which are independent across $i$ and $k$, $ T_B $ is also a normal distributed random variable with $ \mathbb{E}[T_B] = 0 $ and $ {\rm Var}[T_B] = \sum_{i = 1}^n \frac{\varphi^{2(i-1)}}{n^2} \left\| \mathbf{1}_{1 \times p} \mathcal{Q}_B\mathcal{P}_i \right\|^2$. For any $ 0  < \beta < 1 $, let the point $ z_{B,\beta/2} $ satisfies $ \mathbb{P}\left[T_B > z_{B,\beta/2} \right]  = \beta/2$. Therefore, we have $ \mathbb{P}\left[|T_B| \leq z_{B,\beta/2} \right]  = 1-\beta $. Since $ e = s_B + T_B $, it follows that $ \mathbb{P}_e \left[|e -s_B| \leq z_{B,\beta/2} \right]  = 1-\beta $ for any $ e \in \mathbb{R} $.

	Now let $ 	\mu_B =  \frac{c}{n(1-\rho)}  \left\| \mathbf{1}_{1 \times p} \mathcal{Q}_B \right\|$.
	Under the detectable condition $\operatorname{rank}\left[\mathcal{O}_{n-1} \quad \mathcal{J}_{n-1}\right]-\operatorname{rank}\left[\mathcal{J}_{n-1}\right]=n$, ``no alarm is triggered'' implies that $ |s_B| \leq \mu_B $ holds because of  Cauchy-Schwarz inequality. Therefore, we can get that $ \mathbb{P}_e \left[  -\mu_B-z_{B,\beta/2} \leq e \leq \mu_B + z_{B,\beta/2} \right]  \geq 1-\beta $ for any $ e \in \mathbb{R} $, i.e., $ [-\mu_B-z_{B,\beta/2}, \mu_B+z_{B,\beta/2}] $ is a confidence interval for $ e $ with confidence coefficient of not less than $ 1- \beta $.

Based on the results above and Theorem \ref{False Alarm Rate} , we can get the following theorem.
\begin{theorem}\label{error}
	For a linear consensus system of the form (\ref{x(k+1)}) and (\ref{y(k)}), when an asymptotic consensus is reached, for any $ 0< \beta < 1 $, $
	\bigcup_{B}\footnote{$\bigcup_{B}$ traverses all $B$s that meet the detectable condition $ {\rm rank} \begin{bmatrix}
		\mathcal{O}_{n-1} & \mathcal{J}_{n-1}
		\end{bmatrix} - {\rm rank} [ \mathcal{J}_{n-1}]
		= n $. } [-\mu_B-z_{B,\beta/2}, \mu_B+z_{B,\beta/2}] $
	is a confidence interval for $ e $ with confidence coefficient of not less than $ 1- \beta $,
	if the following statements hold:
	\begin{enumerate}
		\item no alarm is triggered;
		\item $ 	
		{\rm rank} \begin{bmatrix}
		\mathcal{O}_{n-1} & \mathcal{J}_{n-1}
		\end{bmatrix} - {\rm rank} [ \mathcal{J}_{n-1}]
		= n
		$.
	\end{enumerate}
\end{theorem}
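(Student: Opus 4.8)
The plan is to build the confidence interval from the decomposition $e = s_B + T_B$ derived in the paragraphs above, handling its two pieces by different means: the term $T_B$ is a Gaussian whose tail gives an exact two-sided probability, while $s_B$ is controlled deterministically by the ``no alarm'' hypothesis. Since agent $1$ does not know which agents form the attack set, and hence does not know the true $B$, the interval is finally taken as a union over all admissible $B$, and I must check that this union preserves the confidence level.

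First I would fix one admissible $B$ (one obeying the detectability condition of statement~2, so that $\mathcal{Q}_B$ and the identity $u(k)=\mathcal{Q}_Br^a(k)$ from Theorem~\ref{Detect} are available). Because $T_B=\sum_{i=1}^{n}\frac{\varphi^{i-1}}{n}\mathbf{1}_{1\times p}\mathcal{Q}_B\mathcal{P}_i\,v(i-1)$ is a fixed linear functional of the independent standard-normal vectors $v(0),\dots,v(n-1)$, it is a zero-mean Gaussian with variance ${\rm Var}[T_B]=\sum_{i=1}^n\frac{\varphi^{2(i-1)}}{n^2}\|\mathbf{1}_{1\times p}\mathcal{Q}_B\mathcal{P}_i\|^2$. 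Defining $z_{B,\beta/2}$ through $\mathbb{P}[T_B>z_{B,\beta/2}]=\beta/2$ and invoking the symmetry of the centered normal law yields $\mathbb{P}[\,|T_B|\le z_{B,\beta/2}\,]=1-\beta$; since $e-s_B=T_B$ this reads $\mathbb{P}[\,|e-s_B|\le z_{B,\beta/2}\,]=1-\beta$, the probability being taken over the privacy noise.

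Next I would bound the deterministic part. ``No alarm'' means $\|r(k)\|\le c\rho^k$ for every $k$, so Cauchy--Schwarz and the triangle inequality give
\begin{equation}\label{sB-bound}
|s_B|=\left|\frac{\mathbf{1}_{1\times p}\mathcal{Q}_B}{n}\sum_{k=0}^{\infty}r(k)\right|\le\frac{\|\mathbf{1}_{1\times p}\mathcal{Q}_B\|}{n}\sum_{k=0}^{\infty}\|r(k)\|\le\frac{\|\mathbf{1}_{1\times p}\mathcal{Q}_B\|}{n}\sum_{k=0}^{\infty}c\rho^k=\mu_B,
\end{equation}
the geometric series converging since $\rho<1$; the hypothesis that consensus is reached (Theorem~\ref{converage}) guarantees that $e$ and the series $\sum_k r(k)$, $\sum_k u(k)$ are well defined. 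Combining \eqref{sB-bound} with the Gaussian estimate, on the probability-$(1-\beta)$ event $\{|e-s_B|\le z_{B,\beta/2}\}$ one has $|e|\le|s_B|+|e-s_B|\le\mu_B+z_{B,\beta/2}$, whence $\mathbb{P}[\,-\mu_B-z_{B,\beta/2}\le e\le\mu_B+z_{B,\beta/2}\,]\ge1-\beta$, so the single-$B$ interval is valid.

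Finally I would pass to the union. The true attack structure $B^{\ast}$, though unknown to agent~$1$, satisfies the detectability condition by statement~2, so its interval is one member of $\bigcup_B[-\mu_B-z_{B,\beta/2},\,\mu_B+z_{B,\beta/2}]$; since that member already covers $e$ with probability at least $1-\beta$ and a superset can only increase coverage, the union covers $e$ with probability at least $1-\beta$, which is the assertion. I expect the main obstacle to be conceptual rather than computational: one must argue cleanly that enlarging the set by unioning over the unknown-$B$ family cannot lower the confidence coefficient, and one must be careful that the ``no alarm'' event --- which constrains the very noise realization that also drives $T_B$ --- is used only to force the sure bound \eqref{sB-bound} on $s_B$, while the probability $1-\beta$ is charged entirely to the Gaussian tail of $T_B$. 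The footnoted minimal-norm choice of $\mathcal{Q}_B$ then simply makes each $\mu_B$, and hence each interval, as tight as possible.
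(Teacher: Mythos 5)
Your proposal is correct and follows essentially the same route as the paper, whose proof is the text immediately preceding the theorem: the decomposition $e = s_B + T_B$, the Gaussian quantile $z_{B,\beta/2}$ for $T_B$, the Cauchy--Schwarz bound $|s_B| \le \mu_B$ forced by the no-alarm hypothesis together with $\sum_{k}c\rho^k = c/(1-\rho)$, and coverage of the union via the interval belonging to the true (unknown) attack matrix $B$. Your explicit caveat about the correlation between the no-alarm event and the noise driving $T_B$ is handled by you exactly as the paper handles it (the $1-\beta$ is charged to the unconditional Gaussian tail while no-alarm only yields the sure bound on $s_B$), so nothing in your write-up diverges from the paper's argument.
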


\section{Numerical Examples}
Consider the following network composed of $ 4 $ agents:
\begin{figure}[!htp]
	\centering
	\begin{tikzpicture}
	[scale=1 ,auto=left,every node/.style={circle,fill=blue!30}]
	\node (n3) at (1,1) {3};
	\node (n4) at (-1,1) {4};
	\node (n1) at (-1,-1) {1};
	\node (n2) at (1,-1) {2};
	\foreach \from/\to in {n1/n2,n1/n4,n2/n3,n3/n4}
	\draw (\from) -- (\to);
	\end{tikzpicture}
	\caption{Network Topology}
\end{figure}
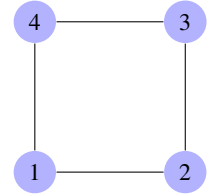

Suppose that the weight matrix  is
\[A= 
\begin{bmatrix}
0.136 & 0.461 & 0 & 0.403  \\
0.461 & 0.153 & 0.386 & 0  \\
0 & 0.386 & 0.278 & 0.336  \\
0.403 & 0 & 0.336 & 0.261  
\end{bmatrix},
\]
which is generated randomly. Suppose that the initial state values of agents are $	x(0) = \begin{bmatrix}
100 & -50 & 50 & -100
\end{bmatrix}^\top.  $
Without loss of generality, assume that agent $ 1 $ is benign and it is running an attack detector.
Then the matrix $ C $ is 
$
\begin{bmatrix}
1 & 0 & 0 & 0  \\
0 & 1 & 0 & 0  \\
0 & 0 & 0 & 1 
\end{bmatrix}.
$

Suppose that agent $ 3 $ is both malicious and curious and other agents all are benign. Since for any agent $ j $, $ j = 1,2,4 $, $ \mathcal{N}_i \cup \{i\} \nsubseteq \mathcal{N}_3 \cup \{3\}  $, according to Theorem \ref{privacy}, the initial state privacy of every benign agent is guaranteed.
Since $ {\rm rank} \begin{bmatrix}
\mathcal{O}_{3} & \mathcal{J}_{3}
\end{bmatrix} - {\rm rank} [ \mathcal{J}_{3}]
= 4 $, according to Theorem \ref{Detect}, there is no undetectable input. In order to avoid being detected, agent $ 3 $ inputs the attack signals $ u_3(k) = -24\times 0.2^{k} $ at every time step $ k $ into the system.

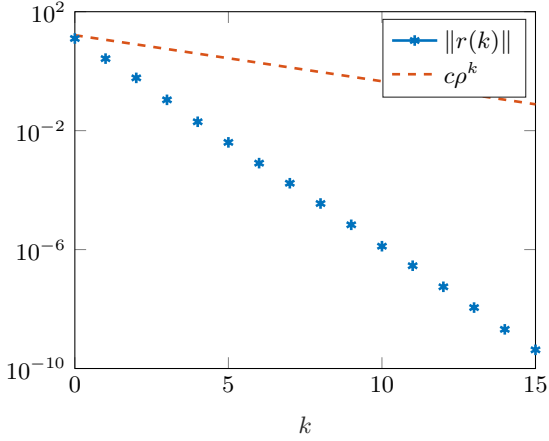
\begin{figure}[htbp] 
	\centering
%
%
\definecolor{mycolor1}{rgb}{0.00000,0.44700,0.74100}%
\definecolor{mycolor2}{rgb}{0.85000,0.32500,0.09800}%
\definecolor{mycolor3}{rgb}{0.92900,0.69400,0.12500}%
\definecolor{mycolor4}{rgb}{0.49400,0.18400,0.55600}%
\definecolor{mycolor5}{rgb}{0.46600,0.67400,0.18800}%
\definecolor{mycolor6}{rgb}{0.30100,0.74500,0.93300}%
\definecolor{mycolor7}{rgb}{0.63500,0.07800,0.18400}%
\begin{tikzpicture}

\begin{axis}[%
width=2.411in,
height=1.869in,
at={(1.011in,0.642in)},
scale only axis,
xmin=0,
xmax=15,
xlabel style={font=\color{white!15!black}},
xlabel={$k$},
ymode=log,
ymin=1e-10,
ymax=100,
yminorticks=true,
axis background/.style={fill=white},
legend style={legend cell align=left, align=left, draw=white!15!black}
]
\addplot [color=mycolor1, line width=1.0pt, draw=none, mark=asterisk, mark options={solid, mycolor1}]
  table[row sep=crcr]{%
0	12.565485098184\\
1	2.64354929493693\\
2	0.594970834952994\\
3	0.108531167438114\\
4	0.0198758209032058\\
5	0.00403163462498659\\
6	0.000807399734492917\\
7	0.000170337008785995\\
8	3.54623928306833e-05\\
9	6.80678862115301e-06\\
10	1.29895104616161e-06\\
11	2.86329942993965e-07\\
12	5.6015857282441e-08\\
13	1.12050695585097e-08\\
14	2.07761115739801e-09\\
15	4.24865081647835e-10\\
};
\addlegendentry{$\|r(k)\|$}

\addplot [color=mycolor2, dashed, line width=1.0pt]
  table[row sep=crcr]{%
0	16.2\\
1	11.34\\
2	7.938\\
3	5.5566\\
4	3.88962\\
5	2.722734\\
6	1.9059138\\
7	1.33413966\\
8	0.933897762\\
9	0.6537284334\\
10	0.45760990338\\
11	0.320326932366\\
12	0.2242288526562\\
13	0.15696019685934\\
14	0.109872137801538\\
15	0.0769104964610765\\
};
\addlegendentry{$c\rho^k$}

\addplot [color=mycolor3, dashed, line width=1.0pt, forget plot]
  table[row sep=crcr]{%
0	0\\
1	0\\
2	0\\
3	0\\
4	0\\
5	0\\
6	0\\
7	0\\
8	0\\
9	0\\
10	0\\
11	0\\
12	0\\
13	0\\
14	0\\
15	0\\
};
\addplot [color=mycolor4, dashed, line width=1.0pt, forget plot]
  table[row sep=crcr]{%
0	0\\
1	0\\
2	0\\
3	0\\
4	0\\
5	0\\
6	0\\
7	0\\
8	0\\
9	0\\
10	0\\
11	0\\
12	0\\
13	0\\
14	0\\
15	0\\
};
\addplot [color=mycolor5, dashed, line width=1.0pt, forget plot]
  table[row sep=crcr]{%
0	0\\
1	0\\
2	0\\
3	0\\
4	0\\
5	0\\
6	0\\
7	0\\
8	0\\
9	0\\
10	0\\
11	0\\
12	0\\
13	0\\
14	0\\
15	0\\
};
\addplot [color=mycolor6, dashed, line width=1.0pt, forget plot]
  table[row sep=crcr]{%
0	0\\
1	0\\
2	0\\
3	0\\
4	0\\
5	0\\
6	0\\
7	0\\
8	0\\
9	0\\
10	0\\
11	0\\
12	0\\
13	0\\
14	0\\
15	0\\
};
\addplot [color=mycolor7, dashed, line width=1.0pt, forget plot]
  table[row sep=crcr]{%
0	0\\
1	0\\
2	0\\
3	0\\
4	0\\
5	0\\
6	0\\
7	0\\
8	0\\
9	0\\
10	0\\
11	0\\
12	0\\
13	0\\
14	0\\
15	0\\
};
\addplot [color=mycolor1, dashed, line width=1.0pt, forget plot]
  table[row sep=crcr]{%
0	0\\
1	0\\
2	0\\
3	0\\
4	0\\
5	0\\
6	0\\
7	0\\
8	0\\
9	0\\
10	0\\
11	0\\
12	0\\
13	0\\
14	0\\
15	0\\
};
\addplot [color=mycolor2, dashed, line width=1.0pt, forget plot]
  table[row sep=crcr]{%
0	0\\
1	0\\
2	0\\
3	0\\
4	0\\
5	0\\
6	0\\
7	0\\
8	0\\
9	0\\
10	0\\
11	0\\
12	0\\
13	0\\
14	0\\
15	0\\
};
\addplot [color=mycolor3, dashed, line width=1.0pt, forget plot]
  table[row sep=crcr]{%
0	0\\
1	0\\
2	0\\
3	0\\
4	0\\
5	0\\
6	0\\
7	0\\
8	0\\
9	0\\
10	0\\
11	0\\
12	0\\
13	0\\
14	0\\
15	0\\
};
\addplot [color=mycolor4, dashed, line width=1.0pt, forget plot]
  table[row sep=crcr]{%
0	0\\
1	0\\
2	0\\
3	0\\
4	0\\
5	0\\
6	0\\
7	0\\
8	0\\
9	0\\
10	0\\
11	0\\
12	0\\
13	0\\
14	0\\
15	0\\
};
\addplot [color=mycolor5, dashed, line width=1.0pt, forget plot]
  table[row sep=crcr]{%
0	0\\
1	0\\
2	0\\
3	0\\
4	0\\
5	0\\
6	0\\
7	0\\
8	0\\
9	0\\
10	0\\
11	0\\
12	0\\
13	0\\
14	0\\
15	0\\
};
\addplot [color=mycolor6, dashed, line width=1.0pt, forget plot]
  table[row sep=crcr]{%
0	0\\
1	0\\
2	0\\
3	0\\
4	0\\
5	0\\
6	0\\
7	0\\
8	0\\
9	0\\
10	0\\
11	0\\
12	0\\
13	0\\
14	0\\
15	0\\
};
\addplot [color=mycolor7, dashed, line width=1.0pt, forget plot]
  table[row sep=crcr]{%
0	0\\
1	0\\
2	0\\
3	0\\
4	0\\
5	0\\
6	0\\
7	0\\
8	0\\
9	0\\
10	0\\
11	0\\
12	0\\
13	0\\
14	0\\
15	0\\
};
\addplot [color=mycolor1, dashed, line width=1.0pt, forget plot]
  table[row sep=crcr]{%
0	0\\
1	0\\
2	0\\
3	0\\
4	0\\
5	0\\
6	0\\
7	0\\
8	0\\
9	0\\
10	0\\
11	0\\
12	0\\
13	0\\
14	0\\
15	0\\
};
\addplot [color=mycolor2, dashed, line width=1.0pt, forget plot]
  table[row sep=crcr]{%
0	0\\
1	0\\
2	0\\
3	0\\
4	0\\
5	0\\
6	0\\
7	0\\
8	0\\
9	0\\
10	0\\
11	0\\
12	0\\
13	0\\
14	0\\
15	0\\
};
\addplot [color=mycolor3, dashed, line width=1.0pt, forget plot]
  table[row sep=crcr]{%
0	0\\
1	0\\
2	0\\
3	0\\
4	0\\
5	0\\
6	0\\
7	0\\
8	0\\
9	0\\
10	0\\
11	0\\
12	0\\
13	0\\
14	0\\
15	0\\
};
\end{axis}
\end{tikzpicture}%
	\caption{One snapshot of the comparison between $ \|r(k)\| $ and $ c\rho^k $}
\end{figure}

\begin{figure}[htbp] 
	\centering
%
%
\definecolor{mycolor1}{rgb}{0.00000,0.44700,0.74100}%
\definecolor{mycolor2}{rgb}{0.85000,0.32500,0.09800}%
\definecolor{mycolor3}{rgb}{0.92900,0.69400,0.12500}%
\definecolor{mycolor4}{rgb}{0.49400,0.18400,0.55600}%
\begin{tikzpicture}

\begin{axis}[%
width=2.411in,
height=1.869in,
at={(1.011in,0.642in)},
scale only axis,
xmin=0,
xmax=15,
xlabel style={font=\color{white!15!black}},
xlabel={$k$},
ymin=-100,
ymax=100,
axis background/.style={fill=white},
legend style={legend cell align=left, align=left, draw=white!15!black}
]
\addplot [color=mycolor1, line width=1.2pt]
  table[row sep=crcr]{%
0	100\\
1	-49.1627108881014\\
2	31.6063821854784\\
3	-26.1911063894635\\
4	5.30031506845636\\
5	-14.8639574205641\\
6	-2.94088651737126\\
7	-10.2367969925849\\
8	-5.83637485578583\\
9	-8.50617918750737\\
10	-6.89024065440906\\
11	-7.86923628387165\\
12	-7.27634415374995\\
13	-7.63545860632765\\
14	-7.41795505001947\\
15	-7.54969240106956\\
};
\addlegendentry{agent 1}

\addplot [color=mycolor2, line width=1.2pt]
  table[row sep=crcr]{%
0	-50\\
1	57.3694122138814\\
2	-37.6126073312767\\
3	12.6219907350534\\
4	-19.330456659355\\
5	-0.267436097584382\\
6	-11.8664309464477\\
7	-4.85262309655477\\
8	-9.1029633030272\\
9	-6.52905425354576\\
10	-8.08806823631907\\
11	-7.14382084574057\\
12	-7.7157291456502\\
13	-7.36933824781865\\
14	-7.57913868433022\\
15	-7.45206764562004\\
};
\addlegendentry{agent 2}

\addplot [color=mycolor3, line width=1.2pt]
  table[row sep=crcr]{%
0	50\\
1	-62.6645116607626\\
2	9.92937815399196\\
3	-23.6509306993531\\
4	0.635314563280099\\
5	-12.8871336212229\\
6	-4.35661262656315\\
7	-9.4336091860681\\
8	-6.33611756858019\\
9	-8.20667049939144\\
10	-7.07239785521216\\
11	-7.7590847853801\\
12	-7.34310108586288\\
13	-7.59503501051476\\
14	-7.44244079940842\\
15	-7.53486247317544\\
};
\addlegendentry{agent 3}

\addplot [color=mycolor4, line width=1.2pt]
  table[row sep=crcr]{%
0	-100\\
1	30.7494404938928\\
2	-32.4894011496814\\
3	7.52100614348473\\
4	-16.5696987018475\\
5	-1.9747742003193\\
6	-10.8346570084889\\
7	-5.4767516084334\\
8	-8.72451974939831\\
9	-6.758086098152\\
10	-7.94929052166006\\
11	-7.22785751257328\\
12	-7.6648254664415\\
13	-7.40016810371659\\
14	-7.56046545979791\\
15	-7.46337747959624\\
};
\addlegendentry{agent 4}

\addplot [color=black, dashed, forget plot]
  table[row sep=crcr]{%
0	0\\
1	0\\
2	0\\
3	0\\
4	0\\
5	0\\
6	0\\
7	0\\
8	0\\
9	0\\
10	0\\
11	0\\
12	0\\
13	0\\
14	0\\
15	0\\
};
\node[right, align=left]
at (axis cs:11,-15) {-7.5000};
\end{axis}
\end{tikzpicture}%
	\caption{The trajectory of each state value $ x_i(k) $. The blue, red, yellow and purple lines correspond to $ x_1(k), x_2(k), x_3(k), x_4(k) $ respectively. The black dashed line corresponds to the average value of the initial state $ x(0) $. The number ``$-7.5000$'' above these lines corresponds to the final convergence value of asymptotic consensus.}
\end{figure}
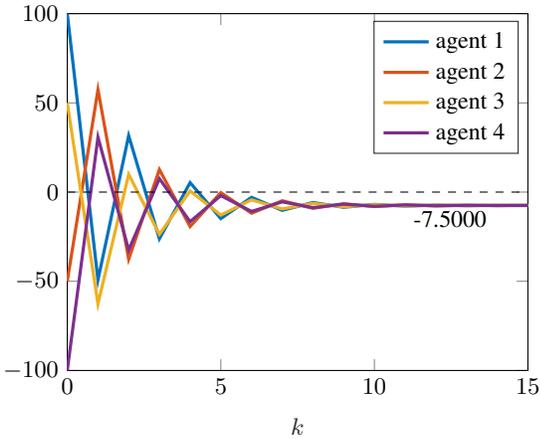

Suppose that agent $ 1 $ selects these parameters as follow: $ c = 16.2, \rho = 0.7, \varphi = 0.2$. According to Theorem \ref{An Estimation of False Alarm Rate}, false alarm rate $ \alpha $ is no more than $  0.01$. Since that no alarm is triggered after $ 2000 $ time steps have been run, and the state values of the neighbors of agent $ 1 $ and its own state value have always been $ -7.5000 $ since the $ 30 $-th time step, it can be considered that the system has achieved an asymptotic consensus. According to Theorem \ref{converage}, when an asymptotic consensus is achieved, the convergence rate $ \varrho \leq \max\{\rho,|\lambda_2|,|\lambda_n| \} = \max\{0.7, |0.2229|,|-0.6057|\} =  0.7$. One snapshots of the comparison between $ \|r(k)\| $ and $ c\rho^k $ and the trajectories of agents' state values are shown in Fig. 2, Fig 3, respectively. From Fig 3, it can be seen that although an asymptotic consensus is achieved, the final convergence value $ -7.5000 $ is not the average value $ 0 $ of the initial state $ x(0) $. 

Now, agent $ 1$ begin to estimate the error of the final convergence value. 
Since agent $ 1 $ does not know which agents are malicious attackers, according to Theorem \ref{error}, it needs to consider all cases that meet the detectable condition $ {\rm rank} \begin{bmatrix}
\mathcal{O}_{3} & \mathcal{J}_{3}
\end{bmatrix} - {\rm rank} [ \mathcal{J}_{3}]
= 4  $. 
According to Theorem \ref{error}, if let $ \beta = 0.001 $, $ [-57.9926,57.9926] $
is a confidence interval for $ e $ with confidence coefficient of not less than $ 0.999 $. If agent $ 1 $ has known that there is at most one malicious attacker in the system, $ [-29.5478,29.5478] $
is a confidence interval for $ e $ with confidence coefficient of not less than $ 0.999 $.

\section{Conclusion}

In this paper, we deal with the case that the consensus system is threatened by a set of unknown agents that are both ``malicious" and ``curious". We propose a privacy-preserving average consensus algorithm equipped with an attack detector with a time-varying exponentially decreasing threshold, for every benign agent, which can guarantee the initial state privacy of every benign agent, under mild conditions.
An upper bound of false alarm rate and the necessary and sufficient condition for there is no undetectable input by the attack detector
in the system are given. We prove that the system can achieve asymptotic consensus almost surely and give an upper bound of convergence rate and
some estimates about the error.

\end{document}